\documentclass[runningheads]{llncs}
\usepackage{graphicx}
\usepackage{natbib}
\usepackage{subcaption}
\usepackage[margin=1in]{geometry}
\usepackage{amssymb,thmtools,thm-restate,mathtools}
\usepackage{pgf,tikz,pgfplots}
\pgfplotsset{compat=1.15}
\usetikzlibrary{arrows}
\usepackage{pgfplots}
\usetikzlibrary{patterns}
\usepgfplotslibrary{fillbetween}
\usetikzlibrary{intersections}
\renewcommand{\refname}{References}

\usepackage[colorlinks=true,linkcolor=olive, citecolor=blue]{hyperref}
\usepackage{cleveref}
\newcommand{\inventory}{k}
\newcommand{\customer}{n}
\newcommand{\EX}{\mathbb{E}}
\newcommand{\prob}{\mathbb{P}}
\DeclareMathOperator*{\argmax}{arg\,max}

\crefname{rmk}{remark}{remarks}
\Crefname{rmk}{Remark}{Remarks}

\usepackage{graphicx} % Required for inserting images

\begin{document}

\title{Dynamic Run-Out Pricing for Perishable Goods}

\author{Farbod Ekbatani}
\institute{University of Chicago Booth School of Business}
\authorrunning{F. Ekbatani}

\maketitle

\begin{abstract}
We study a dynamic pricing problem in which a seller with limited inventory must sequentially price a perishable product for arriving customers with heterogeneous valuations. The key difficulty lies in balancing immediate sales against future opportunities under capacity constraints and stochastic customer valuations. We propose a simple, interpretable, and asymptotically optimal dynamic pricing algorithm, called Dynamic Run-Out Pricing (DROP), that adjusts prices in real time based on current inventory and anticipated future demand. 

Under the assumption that customers' valuations follow bounded or exponential distributions, our theoretical results show that DROP's expected revenue converges to the optimal revenue as market size grows. Unlike traditional notions of near-optimality, where the ratio of the algorithm's revenue to the optimal revenue converges to 1, we show that the absolute difference between the two revenues converges to 0 as the market size increases, underscoring the algorithm's effectiveness.

Furthermore, we extend our analysis to heavy-tailed valuation distributions, specifically the Pareto distribution, where standard run-out heuristics severely underprice inventory by ignoring the value of waiting for extreme valuations. To correct this, we introduce the Tail-Aware DROP (TA-DROP) heuristic, which intentionally inflates prices to protect inventory, and we prove that it restores asymptotic optimality with a strict polynomial convergence rate. 

We complement the theoretical results with numerical experiments to demonstrate the practical applicability and competitive performance of these algorithms across a range of dynamic pricing scenarios.

Keywords: Revenue management, Inventory Clearing Price, Pricing, Hotels, Dynamic Programming
\end{abstract}

% \end{frontmatter}
\section{Introduction}
Across numerous industries, a fundamental operational challenge is the effective allocation of limited capacity to customers with heterogeneous valuations and uncertain arrival times. Dynamic pricing addresses this challenge by allowing sellers to adjust prices in real time, seamlessly aligning fixed supply with stochastic demand to maximize revenue. Yet, the design of optimal pricing policies is inherently complex, as accurately capturing the underlying stochastic dynamics often drives sellers toward highly sophisticated models.

Existing pricing strategies typically fall at opposite ends of a complexity spectrum. Simple, rule-based models are easy to deploy, yet often result in significant revenue loss due to their lack of sophistication. In contrast, advanced methods, such as those based on deep learning, can approach optimal pricing performance but are frequently complex, difficult to interpret, and impractical for routine use. This creates a fundamental trade-off between interpretability and optimality: sellers must often choose between sacrificing potential profits and maintaining operational simplicity.

To navigate this trade-off, this paper proposes a dynamic pricing strategy that achieves both near-optimality and strict interpretability. Our approach adjusts prices in real time based on current inventory levels and anticipated customer arrivals, empowering sellers to maximize revenue without sacrificing operational simplicity.

Our work is motivated by a collaboration with Oracle Nor1.\footnote{For more details, see \href{https://docs.oracle.com/en/industries/hospitality/nor1-cloud/}{NOR1}} Oracle Nor1’s solution enables hotels to offer discounted room upgrades at check-in, enhancing guest experience and generating incremental revenue. A unique characteristic of this context is that the seller possesses an accurate estimate of the total pool of potential upgrade customers: only guests who have already booked standard rooms are eligible. This certainty, rare in dynamic pricing settings, allows for more targeted and efficient pricing strategies.

We focus on a single-item revenue management setting, where a small inventory of premium rooms can be allocated among a large pool of potential upgrade customers. This scenario, which is typical in hotel upgrades---for example, a hotel with 200 standard room guests and only a handful of available premium rooms---intensifies the need for careful resource management. Although this framework abstracts away the additional complexity of multi-night bookings, it provides a tractable and practically relevant foundation for developing and analyzing our pricing algorithm.

To ensure analytical tractability and derive closed-form insights into the algorithm's performance, we analyze our pricing framework under three distinct classes of willingness-to-pay distributions: bounded, exponential, and Pareto. By comparing these regimes, we illustrate how the tail behavior of customer demand fundamentally alters the optimal pricing dynamics. 

We demonstrate that the standard DROP algorithm achieves absolute asymptotic optimality for both bounded and light-tailed (exponential) distributions. However, we show that it falls short when faced with the heavy-tailed Pareto distribution, where standard run-out heuristics ignore the immense option value of extreme valuations and underprice inventory. To bridge this gap, we introduce the Tail-Aware DROP (TA-DROP) heuristic, proving that an intentional, theoretically grounded inflation of prices restores asymptotic optimality with a strict polynomial convergence rate.

\subsection{Literature Review}

The revenue management of perishable goods has been the subject of extensive research, with early work emphasizing the efficacy of simple pricing strategies. \citet{Gvr} introduced a fixed-price strategy that maximizes revenue without requiring continuous adjustments throughout the sales period. Notably, their results show that such static pricing can perform near-optimally, particularly when both the demand and inventory are high, making it highly relevant in practical environments where flexible pricing is challenging. Similarly, within the computer science literature, \citet{Qiqi} employed the concept of the correlation gap to quantify the performance difference between static and dynamically optimized pricing. Specifically, under the assumption of independent customer valuations, they demonstrated a fixed inefficiency gap of $1-\frac{1}{e}$ in the single-item case between the static-fluid benchmark and the true optimum, underscoring the inherent limitations of static pricing in stochastic settings.

Subsequent studies have explored dynamic pricing, where prices are adapted based on real-time market conditions. \citet{MM} demonstrated the value of re-optimizing prices, especially in high-traffic circumstances. \citet{CF} extended this exploration to environments with forecast uncertainty, illustrating how dynamic re-optimization enhances resilience to market shocks and demand variations. \citet{Jasin} provided analytical support that time-varying (piecewise-static) prices can significantly outperform purely static approaches. Building on this approach of repeated re-optimization, \citet{wang2022constant} analyze price-based network revenue management, demonstrating that frequently re-solving the fluid model yields strong theoretical performance against the exact optimal policy. However, their analysis is fundamentally rooted in the fluid approximation over a finite time horizon $T$. In contrast, our work operates in the large market regime. As our framework benefits from a fixed pool of potential buyers $N$, this demand predictability allows us to derive explicit, closed-form policies (like DROP and TA-DROP) rather than relying on repeated fluid re-solving. For broader surveys on deterministic pricing and their limitations, see \citet{BC} and \citet{GT}.

The study most closely related to ours is the contemporaneous work by \citet{abdallah2025dynamic,abdallah2026regime}, who also analyze dynamic pricing with limited inventory in high-demand regimes. Their model shares key similarities with our setting, but differs in that it addresses uncertainty in buyer arrival rates, whereas our framework benefits from a known and fixed pool of potential buyers—reflecting the particular context of hotel upgrades. By leveraging this demand predictability, our analysis yields closed-form pricing solutions without the need to solve complex Hamilton-Jacobi-Bellman (HJB) equations. Furthermore, utilizing these explicit closed forms enables us to establish strictly stronger performance bounds.

Related but distinct lines of research include \citet{reiman2008asymptotically}, who show that single re-optimization suffices to achieve sublinear regret in network revenue management problems, and \citet{mcafee2008dynamic}, who derive the exact optimal dynamic pricing policy for demand models with constant elasticity. This constant elasticity model directly relates to the heavy-tailed Pareto distribution we study in \Cref{sec:pareto}. However, while \citet{mcafee2008dynamic} focus strictly on characterizing the optimal pricing path, our work investigates the vulnerability of practical run-out heuristics under these heavy-tailed conditions and introduces the TA-DROP algorithm to restore asymptotic optimality.

\subsection{Main Contributions and Insights}
In this paper, we show that simple, interpretable dynamic pricing algorithms approach optimal revenue as the number of potential buyers grows across distinct classes of willingness-to-pay (WTP) distributions. Specifically, our work provides both conceptual insights and rigorous technical bounds, detailed as follows:

\begin{enumerate}
\item \textbf{The Efficacy of Simple Pricing Heuristics:} As a primary conceptual contribution, we demonstrate that highly complex optimization models are not strictly necessary for maximizing revenue in large markets. Simple, rule-based algorithms that adjust prices based solely on current inventory and anticipated future demand can achieve absolute asymptotic optimality.

\item \textbf{Tail-Aware Pricing for Heavy-Tailed Demand:} We establish how the tail behavior of customer valuations dictates the fundamental structure of optimal pricing heuristics. Under heavy-tailed distributions, matching the standard run-out rate severely underprices the option value of extreme valuations. To achieve near-optimality, sellers must intentionally inflate prices by restricting the target probability of a sale to a value strictly below the standard fluid run-out rate.

\item \textbf{Convergence under Exponential WTP:} On the technical front, we mathematically prove that under exponentially distributed WTP, a simple pricing algorithm (DROP)—which sets a price such that the next customer buys with a probability equal to the remaining inventory divided by remaining customers—converges to the optimal revenue with a gap of $o(n^{\epsilon-1})$ for all $\epsilon > 0$.

\item \textbf{Convergence under Pareto WTP:} We prove that under heavy-tailed Pareto distributed WTP with shape parameter $\beta > 1$, an augmented pricing algorithm (TA-DROP)—which intentionally inflates the price such that the sale probability becomes a regulated function of the standard run-out rate—converges to the optimal revenue with an optimality gap of $o(n^{\epsilon+\frac{1}{\beta}-1})$ for all $\epsilon>0$.
\end{enumerate}

\subsection{Paper Organization}
In \Cref{sec:model}, we formalize the model and provide a detailed presentation of the DROP algorithm. In \Cref{sec:mainResults}, we establish that, assuming a large customer base, DROP achieves near-optimal revenue for exponentially distributed utilities. Furthermore, in \Cref{sec:numerical}, we evaluate the finite-sample performance and optimality gaps of these algorithms using numerical simulations. Finally, in \Cref{sec:pareto}, we extend our analysis to capture the heavy-tailed nature of the Pareto distribution, introduce the TA-DROP heuristic, and establish its strict polynomial convergence rate.
We will outline the main steps of the proofs in the main body of this document; however, the formal arguments and detailed derivations are deferred to \Cref{sec:proofs}. 

\section{Preliminaries}\label{sec:model}
A monopolist seller has an initial inventory of $K$ units of a single item and  $N$ potential customers, each with their own random utility (or willingness-to-pay) for the item. The buyers are homogeneous and arrive sequentially, each observing their i.i.d utility $v_i$, drawn from a fixed utility distribution with continuous CDF $F(\cdot)$. After observing the price $p_i$, the buyer $i$ decides to buy the item if $v_i \geq p_i$ and leaves without making a purchase otherwise. This implies that by setting price $p_i$ a purchase happens with probability $1 - F(p_i)$. 

The seller's problem is to determine the optimal prices $(p_i)_{i\in[N]}$ that maximize total expected revenue. This optimization can be formulated as a dynamic programming problem, characterized by a state space of $\mathbb{N} \times \mathbb{N}$ and defined by the following Bellman equation for the value function $V(\cdot\;,\cdot)$:
\begin{align}\label{eq:bellman}\tag{Bellman equation}
V(\inventory,\customer) = \max_p F(p)V(\inventory,\customer-1) + \left(1-F(p)\right)\left(V(\inventory-1,\customer-1) + p\right),
\end{align}
where the first term reflects the scenario in which the buyer does not make a purchase, and the second term is for the case when a purchase is made. Hence the seller is maximizing $V(K,N)$ with the boundary condition $V(\cdot,0) = V(0,\cdot) = 0$. 

Throughout the paper we use $k,n$ as the state variable of the dynamic programming where $k$ is defined as the remaining inventory and $n$ indicates the number of remaining potential customers. It is important to note that defining the state as a pair $(\inventory,\customer)$ is valid as the seller has complete information about the inventory level and future arrivals.\footnote{In the context of our hotel upgrade application, this formulation assumes there are no same-day reservations or no-shows, ensuring the total pool of eligible customers is fixed and known in advance.} Our proposed dynamic pricing algorithm, called \emph{Dynamic Run-Out Pricing} (DROP), sets the price in state $(\inventory,\customer)$ such that the probability of the next customer making a purchase is equal to $\frac{\inventory}{\customer}$.  This contrasts with \emph{Static Run-Out Pricing} (SROP), introduced by \citet{Gvr}, where a fixed price is chosen using the same principle at the outset and remains unchanged throughout the entire horizon. This approach is grounded in the fluid approximation of the problem, which involves solving the following linear programming (LP) relaxation of the original discrete problem:
\begin{align*}\label{LP}\tag{LP}
    \max_p p \cdot \min \left(k,n (1-F(p))\right)
\end{align*}
This LP relaxation renders the problem analytically tractable by replacing discrete, binary purchase decisions with deterministic fractional allocations, where a price $p$ yields exactly $1-F(p)$ units of sale. Solving this deterministic fluid model establishes a baseline strategy for maximizing expected revenue. By anchoring our dynamic algorithm to the structural insights of this relaxation, we effectively balance the immediate reward of a sale against the opportunity cost of depleting inventory, ensuring optimal resource allocation throughout the horizon.

To formally characterize this baseline pricing strategy, we must examine the optimal solution to the fluid approximation in \eqref{LP}. In this deterministic regime, the optimal approach simply equates the total expected demand to the available capacity, satisfying $n(1-F(p)) = k$. Isolating the optimal price $p$ from this equality requires inverting the demand probability, which naturally motivates the following definition:

\begin{definition}\label{def:qunatileDef}
For a distribution with a continuous cumulative distribution function (CDF) $F(\cdot)$, define function $\bar{q}_F: [0,1] \rightarrow \mathbb{R}$ to be the inverse of the function $1-F(\cdot)$. That is,
\begin{align*}
    \bar{q}_F(x) = p \iff 1-F(p) = x
\end{align*}
\end{definition}
In other words, if $F(\cdot)$ represents the distribution of a customer's willingness to pay, then $\bar{q}_F(x)$ outputs the exact price required to secure a purchase with a target probability of $x$. 

Before proceeding, we introduce the following proposition, which serves as a benchmark to compare the performance of our algorithms.
\begin{restatable}{proposition}
{propSROP}\label{propSROP}
    The Static run-out pricing (SROP) introduced in \citet{Gvr} sells asymptotically $K-e^{-K}\frac{K^K}{(K-1)!}$ items on average as the total number of buyers $N$ approaches infinity.
\end{restatable}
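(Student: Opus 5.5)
SROP fixes the price $p = \bar{q}_F(K/N)$ once at the outset, so each of the $N$ buyers independently purchases with probability exactly $K/N$ (ignoring inventory, which only caps sales). The plan is to write the number of items sold as $\min(X_N, K)$, where $X_N \sim \mathrm{Binomial}(N, K/N)$ counts the buyers whose valuation exceeds $p$. Once inventory runs out no further sales occur, so this identity is exact. The expected number sold is then $\EX[\min(X_N,K)] = K - \EX[(K - X_N)^+]$, which involves only the finitely many values $X_N \in \{0,\dots,K-1\}$.

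Next I pass to the limit $N\to\infty$ with $K$ fixed. By the Poisson limit theorem, $\prob(X_N = j) \to e^{-K}K^j/j!$ for each fixed $j$. Since $\EX[\min(X_N,K)] = \sum_{j=0}^{K-1} j\,\prob(X_N=j) + K\,\prob(X_N \ge K)$ is a finite combination of these probabilities, the limit is $\EX[\min(X,K)]$ with $X\sim\mathrm{Poisson}(K)$. No dominated-convergence argument is needed, because only finitely many terms appear.

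The remaining step, and the only one requiring care, is evaluating $\EX[\min(X,K)]$ in closed form. I use $\EX[\min(X,K)] = K - \sum_{j=0}^{K}(K-j)\,e^{-K}K^j/j!$. The identity $j\cdot K^j/j! = K\cdot K^{j-1}/(j-1)!$ lets me split the sum:
\begin{align*}
\sum_{j=0}^{K}(K-j)\frac{K^j}{j!} = K\sum_{j=0}^{K}\frac{K^j}{j!} - K\sum_{j=0}^{K-1}\frac{K^{j}}{j!} = K\cdot\frac{K^K}{K!} = \frac{K^K}{(K-1)!}.
\end{align*}
The sums telescope to the single top term, giving the limit $K - e^{-K}K^K/(K-1)!$ as claimed. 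I expect no real obstacle beyond getting the indices right in this telescoping step. It is also worth remarking that, by Stirling's formula, the loss term is of order $\sqrt{K/(2\pi)}$, which is the benchmark DROP will be compared against.
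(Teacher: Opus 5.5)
Your proposal is correct and follows essentially the same route as the paper: a binomial-to-Poisson limit for the number of willing buyers, sales capped at $K$, and the telescoping identity $\sum_{j=0}^{K-1}(K-j)K^j/j! = K^K/(K-1)!$. Writing sales as $\min(X_N,K)$ and noting that only finitely many point probabilities enter the limit makes explicit the exchange of limit and expectation, which the paper leaves implicit.
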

\begin{corollary}
    The expected revenue of the SROP is asymptotically equal to
    \begin{align*}
        \bar{q}_F\left(\frac{K}{N}\right)\left(K-e^{-K}\frac{K^K}{(K-1)!}\right)
    \end{align*}
\end{corollary}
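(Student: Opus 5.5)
The corollary follows from \Cref{propSROP} together with the observation that SROP collects the same price on every sale. By definition, SROP fixes at the outset the single price $p^\ast=\bar{q}_F\left(\frac{K}{N}\right)$, so that each arriving customer buys independently with probability $1-F(p^\ast)=\frac{K}{N}$ (\Cref{def:qunatileDef}). The price never changes, so the realized revenue equals $p^\ast$ times the number of units sold. Writing $S_N$ for that number, we get
\begin{align*}
\EX[\text{Rev}_{\mathrm{SROP}}] = \bar{q}_F\left(\tfrac{K}{N}\right)\,\EX[S_N].
\end{align*}
It therefore suffices to control $\EX[S_N]$, which is exactly what \Cref{propSROP} addresses.

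For completeness, I would record why $\EX[S_N]$ has the stated limit. If inventory were unlimited, the number of buyers willing to pay $p^\ast$ would be $B_N\sim\mathrm{Bin}(N,K/N)$. Since inventory is $K$, we have $S_N=\min(B_N,K)$. As $N\to\infty$, $B_N$ converges in distribution to $B\sim\mathrm{Poisson}(K)$. Because $\min(\cdot,K)$ is bounded by $K$, bounded convergence gives $\EX[S_N]\to\EX[\min(B,K)]$.

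Next I would compute this limit using $\EX[\min(B,K)]=K-\EX[(K-B)^+]$:
\begin{align*}
\EX[(K-B)^+]=\sum_{j=0}^{K}(K-j)e^{-K}\frac{K^j}{j!}.
\end{align*}
The two pieces telescope, since $\sum_{j\le K} j\,K^j/j! = K\sum_{j\le K-1}K^j/j!$. This leaves $K\cdot e^{-K}K^K/K! = e^{-K}K^K/(K-1)!$, which matches \Cref{propSROP}.

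The main subtlety is interpreting ``asymptotically equal'' correctly. The factor $\bar{q}_F(K/N)$ depends on $N$ and may diverge: for exponential or Pareto valuations, $\bar{q}_F(x)\to\infty$ as $x\to 0$. So the statement must be read as a ratio:
\begin{align*}
\frac{\EX[\text{Rev}_{\mathrm{SROP}}]}{\bar{q}_F(K/N)\left(K-e^{-K}\frac{K^K}{(K-1)!}\right)}\to 1.
\end{align*}
The ratio version follows immediately from the exact factorization above together with $\EX[S_N]\to K-e^{-K}K^K/(K-1)!$. That limit is strictly positive for $K\ge 1$, so dividing by it is legitimate.

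An absolute-difference version would require more. The error $\EX[S_N]-\EX[\min(B,K)]$ is $O(1/N)$ by standard binomial–Poisson coupling bounds, so the absolute gap would be $O(\bar{q}_F(K/N)/N)$. This is small for bounded and exponential valuations. It need not be small for heavy-tailed valuations, which is why I would state the corollary only in the ratio sense.
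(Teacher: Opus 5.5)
Your proof is correct and follows the paper's route. The corollary is obtained from \Cref{propSROP} by pulling out the fixed price $\bar{q}_F(K/N)$, and the paper's proof of that proposition is the same Binomial-to-Poisson limit followed by the same telescoping evaluation of $K-\EX[(K-B)^+]$; the paper writes this as a finite sum over $i<K$, so the limit passes through without any convergence theorem.

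Your ratio reading of ``asymptotically equal'' is the sensible one, since $\bar{q}_F(K/N)$ may diverge. One correction to your last remark, though: $x\,\bar{q}_F(x)\to 0$ as $x\to 0$ whenever $F$ has finite mean. So your $O(\bar{q}_F(K/N)/N)$ bound already gives the absolute version for Pareto with $\beta>1$, with error $O(N^{1/\beta-1})$. The absolute version can fail only in the infinite-mean regime, which the paper excludes.
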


The DROP algorithm dynamically adapts this static fluid solution. By substituting the instantaneous state variables—remaining inventory $k$ and remaining potential customers $n$—into our target probability, DROP determines the price for the very next arrival as $p = \bar{q}_F\left(\frac{k}{n}\right)$. This ensures that the seller continually adjusts the price along the quantile function to match the current ratio of supply to expected demand. Consequently, the recursion for updating the price can be expressed as follows:
\begin{align}\label{DICRec}\tag{DROP recursion}
    D(\inventory,\customer) = 
    \left(1-\frac{\inventory}{\customer} \right)
    D(\inventory,\customer-1) + \frac{\inventory}{\customer}\left( D(\inventory-1,\customer-1) + \bar{q}_F\left(\frac{\inventory}{\customer}\right)\right)
\end{align}
It is noteworthy that DROP is less effective in scenarios with small $n$. In particular, when $n=k$, DROP sets the price to $0$ since the probability of purchase is supposed to be $1$. Given that this paper focuses on high-demand conditions (large $N$), setting a near-optimal price for situations where $\frac{k}{n}$ is large is not crucial for convergence. Such cases have a low probability of occurring, and their effect will cancel out in the overall analysis. Later, in \Cref{sec:numerical}, we will refine the algorithm by introducing DROP* to better address these lower values.

In the remainder of this section, we establish a straightforward yet foundational result: under bounded distributions, while SROP suffers from significant inventory excess due to its rigid pricing structure, our dynamic adaptation (DROP) achieves asymptotically optimal revenue.

\begin{definition}\label{def:bounded}
    A willingness-to-pay distribution is bounded if there exists a maximum valuation $\bar{p} < \infty$ such that no customer is willing to make a purchase at a price strictly greater than $\bar{p}$; equivalently, $F(\bar{p}) = 1$.
\end{definition}
 
\begin{restatable}{theorem}{thmUniform}\label{thmUniform}
    For any bounded willingness-to-pay distribution $F(\cdot)$, the optimality gap under the DROP algorithm vanishes as the market size grows:
    \begin{align*}
    \lim_{n \rightarrow \infty} \left[V(k,n) - D(k,n)\right] = 0.
    \end{align*}
\end{restatable}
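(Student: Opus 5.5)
Both $V(k,n)$ and $D(k,n)$ are sandwiched between the same limit, $k\bar p$, where $\bar p$ is the upper endpoint of the support (the smallest value with $F(\bar p)=1$). The upper bound is immediate: every sale is made at a price at which the buyer's valuation is at least that price, so each sale yields at most $\bar p$. Hence $V(k,n)\le k\bar p$ for all $n$. This can also be checked by induction on the \eqref{eq:bellman}, since $p>\bar p$ yields zero purchase probability. Because DROP is a feasible policy, $D(k,n)\le V(k,n)$. It therefore suffices to show $\liminf_{n\to\infty} D(k,n)\ge k\bar p$ for each fixed $k$. Then $0\le V(k,n)-D(k,n)\le k\bar p - D(k,n)\to 0$.

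The lower bound goes by induction on $k$, using the \eqref{DICRec}. First record that $\bar q_F(x)\to\bar p$ as $x\downarrow 0$, because $1-F(p)>0$ for every $p<\bar p$. Take $k=1$. Then $D(1,n)=\frac{1}{n}\bar q_F(\frac1n)+(1-\frac1n)D(1,n-1)$, so $nD(1,n)=\sum_{m=1}^{n}\bar q_F(1/m)$. This is a Cesàro average of a sequence converging to $\bar p$, hence $D(1,n)\to\bar p$.

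For general $k$, assume $D(k-1,m)\to(k-1)\bar p$ and set $a_n=D(k,n)$. The recursion reads
\begin{align*}
a_n=\Bigl(1-\tfrac{k}{n}\Bigr)a_{n-1}+\tfrac{k}{n}\,b_n,\qquad b_n:=D(k-1,n-1)+\bar q_F\bigl(\tfrac kn\bigr)\longrightarrow k\bar p .
\end{align*}
I would prove a lemma: if $a_n=(1-\frac kn)a_{n-1}+\frac kn b_n$ with $b_n\to L$, then $a_n\to L$. Set $e_n=a_n-L$, so $e_n=(1-\frac kn)e_{n-1}+\frac kn(b_n-L)$. For $n\ge n_0$ we have $|b_n-L|<\varepsilon$, and this gives $|e_n|\le \prod_{m=n_0}^{n}(1-\frac km)|e_{n_0-1}|+\varepsilon$. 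The bound holds because it is a convex combination: the weights multiplying $\varepsilon$ together with the product sum to one. The product behaves like $(n_0/n)^k\to0$, since $\sum k/m$ diverges. Hence $\limsup|e_n|\le\varepsilon$.

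The main obstacle is the initial region $n$ close to $k$, where $\frac kn$ is near $1$ and DROP prices near zero. One also needs the recursion coefficients in $[0,1]$, which holds for $n\ge k$. When $n<k$ the ratio exceeds one, and there I interpret DROP as pricing at the lower end (any sensible convention keeps $D$ bounded). Because the lemma only needs finiteness of $e_{n_0-1}$, these early states contribute a term that decays like $(n_0/n)^k$ and do not matter. This also makes rigorous the paper's remark that small-$n$ states "cancel out".

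Combining the induction with the upper bound gives $V(k,n)-D(k,n)\to0$. In fact both converge to $k\bar p$.
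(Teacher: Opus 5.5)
Your argument is correct and follows the paper's proof. Both use the sandwich $D(k,n)\le V(k,n)\le k\bar p$, then argue by induction on $k$ that the DROP recursion averages the quantity $D(k-1,n-1)+\bar q_F(k/n)\to k\bar p$. The only difference is presentation. The paper unrolls the recursion into the weights ${i-1\choose k-1}/{n\choose k}$ and uses the hockey-stick identity to show the early weights vanish, while you use the equivalent telescoping convex-combination bound on $e_n$. You also usefully make explicit what the paper uses implicitly: $\bar p$ must be the smallest point with $F(\bar p)=1$, so that $\bar q_F(x)\to\bar p$ as $x\downarrow 0$.
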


To prove this theorem, let $\bar{p}$ denote the upper bound of the distribution $F(\cdot)$. Because no single item can ever be sold for a price exceeding $\bar{p}$, the total expected revenue is strictly bounded above by $\bar{p}k$. Furthermore, since the optimal policy always weakly outperforms the DROP heuristic, we can establish the following sequence of inequalities:
$$D(k,n) \leq V(k,n) \leq \bar{p}k.$$
By the Squeeze Theorem, if we can demonstrate that the expected revenue under DROP approaches this theoretical maximum—that is, $\lim_{n \rightarrow \infty} D(k,n) = \bar{p}k$—then the optimality gap must vanish. Consequently, the proof of \Cref{thmUniform} directly reduces to establishing the following proposition.

\begin{restatable}{proposition}{uniformD}\label{propUniform}
For any bounded willingness-to-pay distribution $F(\cdot)$,
    \begin{align*}
    \lim_{n \rightarrow \infty} D(k,n) = \bar{p}k.
    \end{align*}
\end{restatable}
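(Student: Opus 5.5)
We argue by induction on the inventory level $k$, holding $k$ fixed and letting $n\to\infty$. Since $\bar{p}$ is the upper bound of $F$ and $F$ is continuous, $\bar{q}_F(x)\to\bar{p}$ as $x\downarrow 0$. Also $D(k,n)\le \bar{p}k$ trivially, so it suffices to prove $\liminf_{n\to\infty} D(k,n)\ge \bar{p}k$.

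\textbf{Probabilistic representation.} Fix $k$ and $\epsilon>0$, and choose $\delta>0$ with $\bar{q}_F(x)\ge \bar{p}-\epsilon$ for all $x\le\delta$. Write $D(k,n)$ as the expected sum of the prices charged at the successive sales under DROP. A sale made at state $(j,m)$ earns $\bar{q}_F(j/m)$, which is at least $\bar{p}-\epsilon$ whenever $m\ge j/\delta$. Let $M=\lceil k/\delta\rceil$, a constant independent of $n$. Every sale occurring while at least $M$ customers remain is priced at least $\bar{p}-\epsilon$. Hence
\[
D(k,n)\ge (\bar{p}-\epsilon)\,\EX\left[S_n\right],
\]
where $S_n$ is the number of units sold while at least $M$ customers remain.

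\textbf{Counting sales.} It remains to show $\EX[S_n]\to k$. Equivalently, the probability that inventory is not exhausted by the time $M$ customers remain tends to $0$. The key observation is that the DROP sale probability $j/m$ yields an exact combinatorial structure. With $j$ units and $m$ customers remaining, the next customer buys with probability $j/m$. This is precisely the law of choosing uniformly at random a $j$-subset of the remaining $m$ customers as buyers. By induction, the set of buyers among the $n$ customers is then a uniformly random $k$-subset of $\{1,\dots,n\}$, and all $k$ units are sold almost surely. Under this description, the probability that some buyer falls among the last $M$ customers is at most $kM/n\to 0$. Therefore $\EX[S_n]\ge k(1-kM/n)\to k$.

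\textbf{Conclusion and main obstacle.} Combining the two steps, $\liminf_n D(k,n)\ge(\bar{p}-\epsilon)k$ for every $\epsilon>0$, and the proposition follows. The main point to verify carefully is the uniform-subset coupling. It is a short induction on the recursion: conditioned on the first customer buying, the remaining $k-1$ buyers form a uniform subset of the other $n-1$ customers, and similarly when the first customer does not buy. If that argument encountered trouble, the fallback is a direct induction on $k$ in the \eqref{DICRec}. The fallback writes
\[
D(k,n)\ge \sum_m \prob(\text{first sale at customer } m)\left[\bar{q}_F(\cdot)+D(k-1,n-m)\right],
\]
and then shows that the first sale occurs at $m=o(n)$ with probability tending to $1$, while the remainder is handled by the inductive hypothesis.
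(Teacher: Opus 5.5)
Your argument is correct, but it is organized differently from the paper's. The paper inducts on $k$. It unrolls \eqref{DICRec} into $D(k,n)=\sum_{i}\frac{\binom{i-1}{k-1}}{\binom{n}{k}}\bigl(D(k-1,i-1)+\bar q_F(k/i)\bigr)$, bounds the bracket by $k\bar p-\epsilon$ for $i\ge M_\epsilon$ via the inductive hypothesis, and uses the hockey-stick identity to show that the remaining weight vanishes. Those weights are exactly the law of the first buyer's position under your uniform-subset coupling. So the paper uses only that one marginal and handles the later sales through induction. You instead use the full uniform $k$-subset law and treat all $k$ sales at once. This removes the induction and gives the explicit finite-$n$ bound $D(k,n)\ge(\bar p-\epsilon)k(1-kM/n)$ with $M=\lceil k/\delta(\epsilon)\rceil$, whereas the paper's $M_\epsilon$ is non-explicit.

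Your representation is stronger than this statement needs. It gives $D(k,n)=\EX\bigl[\sum_{r=1}^{k}\bar q_F\bigl((k-r+1)/u_{(r)}\bigr)\bigr]$, where $u_{(1)}>\dots>u_{(k)}$ is a uniform $k$-subset of $\{1,\dots,n\}$. For exponential WTP this collapses to $D(k,n)=\frac{k}{n}\log(n!)-\log(k!)$, so \Cref{DConverging} follows directly from Stirling's formula. What the paper's recursion-based induction buys is that it does not need the sale probability to be exactly $j/m$. That is why it carries over to TA-DROP, whose sale probability $C_j^{\beta/(\beta-1)}/m$ breaks the uniform-subset structure.

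Two minor points. First, discarding the sales made with fewer than $M$ customers left uses nonnegative prices; the paper tacitly assumes the same when it drops the terms $i<M_\epsilon$. Second, your fallback is misstated: the first sale occurs at a position of order $n/(k+1)$, not $o(n)$. What is actually needed is only that at least $M$ customers remain at the first sale with probability tending to one, which is what the paper's hockey-stick step shows.
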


\section{Asymptotic Optimality under Exponential Demand}\label{sec:mainResults}
In this section, and specifically in \Cref{thmExp}, we demonstrate that the performance guarantees of the DROP algorithm are not restricted to bounded distributions. Instead, we establish that these guarantees extend to unbounded cases, achieving strict asymptotic convergence under an exponentially distributed willingness-to-pay. \Cref{table:results} summarizes the closed-form asymptotic revenue limits for each pricing benchmark as a function of the total number of potential customers $N$ and the initial inventory $K$.
\begin{table}[htbp]
\centering
\small

\begin{tabular}{|c|c|c|}
\hline
& \textbf{Bound} & \textbf{Result} \\
\hline
\text{SROP} & $\left(\log(N)-\log(K)\right)\left(K-e^{-K}\frac{K^K}{(K-1)!}\right)$ & \Cref{propSROP}  \\
\hline
\text{DROP} & $K\log(N)-K - \log(K!)$ & \Cref{DConverging}\\
\hline
\text{OPT Online} & $K\log(N)-K - \log(K!)$ & \Cref{vConverging}\\
\hline
\text{OPT Offline} & $KH_N - KH_{K-1} + K-1$ & \Cref{PConverging} \\
\hline
\end{tabular}
\caption{Summary of convergence characteristics proven in this paper under the exponential distribution. $K$ represents the initial inventory and $N$ denotes the total number of potential customers.}
\label{table:results}
\end{table}

\begin{definition}\label{def:exponential}
    A willingness to pay distribution is defined as exponential if its CDF takes the form $F(p) = 1-e^{-cp}$ for some constant $c$.
\end{definition}
\begin{remark}
    For any exponential distribution, one can assume $c=1$ without loss of generality by changing the currency. Thus, for simplicity, we will assume $c=1$ when discussing exponential distributions, leading to the form $F(p) = 1 - e^{-p}$.
\end{remark}

In the proof of \Cref{thmUniform}, we relied on the existence of an upper bound for the revenue and demonstrated that this bound is achieved by DROP when the number of customers becomes large. However, this approach cannot be applied in the case of exponential distributions, as the distribution is unbounded above. Since the result in \Cref{thmUniform} may be expected, it is not necessarily a strong indicator of DROP's effectiveness, as any sufficiently smart algorithm should be able to leverage a large number of customers to achieve optimal revenue in bounded cases. By contrast, the convergence in \Cref{thmExp} is indeed surprising and highlights the true power of DROP in handling unbounded distributions.

\begin{restatable}{theorem}{thmExp}\label{thmExp}
    For an exponential willingness to pay distribution $F(.)$, and any $\alpha < 1$
    \begin{align*}
    \lim_{\customer \rightarrow \infty} \left[ V(\inventory,\customer)- D(\inventory,\customer)\right] n^{\alpha}
    = 0
    \end{align*}
\end{restatable}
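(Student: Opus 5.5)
The plan is to write the optimality gap as an expected sum of one-step losses along the DROP trajectory, and to show each loss is tiny except in late states that DROP rarely reaches. For the exponential distribution we have $\bar{q}_F(x)=-\log x$. Parametrize the price by the purchase probability $x=e^{-p}$. The \ref{eq:bellman} then reads
\begin{align*}
V(k,n)=V(k,n-1)+\max_{x\in(0,1]} x\left(-\log x-\Delta(k,n-1)\right),
\end{align*}
where $\Delta(k,m):=V(k,m)-V(k-1,m)$ is the marginal value of a unit. The maximizer is $x^*=e^{-1-\Delta(k,n-1)}$, and the optimal one-step gain is $e^{-1-\Delta(k,n-1)}$.

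\textbf{Step 1: recursion for the gap.} Define the one-step loss $L(k,n)\ge 0$ as the optimal one-step objective minus its value at the DROP choice $x=k/n$. Set $G(k,n)=V(k,n)-D(k,n)$ and subtract the \ref{DICRec} from the Bellman equation. This gives
\begin{align*}
G(k,n)=\left(1-\tfrac{k}{n}\right)G(k,n-1)+\tfrac{k}{n}G(k-1,n-1)+L(k,n).
\end{align*}
Hence $G(K,N)=\sum_{j,m}\pi(j,m)\,L(j,m)$, where $\pi(j,m)$ is the probability that DROP visits state $(j,m)$.

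The key observation is that DROP sells to the next customer with probability exactly (remaining units)/(remaining customers). Its sale times are therefore a uniformly random $K$-subset of $[N]$. Consequently $\pi(j,m)=\binom{m}{j}\binom{N-m}{K-j}/\binom{N}{K}$, which is $O\big((m/N)^j\big)$ for fixed $K$.

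\textbf{Step 2: the marginal value is close to its fluid value.} I will use the asymptotic expansion of the optimal value from \Cref{vConverging}, strengthened to a rate by induction on $k$ through the explicit recursion $V(k,n)=V(k,n-1)+e^{-1-\Delta(k,n-1)}$. The target is
\begin{align*}
V(k,n)=k\log n-k-\log k!+O(n^{-1+\epsilon}).
\end{align*}
Given this, $\Delta(k,n)=\log(n/k)-1+\delta(k,n)$ with $\delta=O(n^{-1+\epsilon})$. It follows that $x^*=\tfrac{k}{n}e^{-\delta}$.

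The function $x(-\log x-\Delta)$ is smooth and concave with curvature $1/x$. The loss is therefore quadratic:
\begin{align*}
L(k,n)\approx \frac{(k/n-x^*)^2}{2x^*}=O\!\left(\tfrac{k}{n}\delta^2\right)=O(n^{-3+2\epsilon}).
\end{align*}
Establishing this uniform rate for $\delta$ is the step I expect to be the main obstacle. The induction must control the error term in $V(k,n)-V(k,n-1)$ carefully, and \Cref{vConverging} alone gives only the limit. I would first prove the expansion for $k=1$, where the recursion is explicit: $V(1,n)=V(1,n-1)+e^{-1-V(1,n-1)}$ and $V(1,n)\approx\log n-1$. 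I would then propagate the rate to general $k$, losing at most $n^{\epsilon}$ factors.

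\textbf{Step 3: split the sum at $m=n^{\gamma}$.} For states with $m\ge N^{\gamma}$, the loss bound gives a total contribution of at most $\sum_m O(m^{-3+2\epsilon})$ summed over $m\ge N^\gamma$, weighted by $\pi\le 1$. This is $O(N^{-\gamma(2-2\epsilon)})$, which is $o(N^{-\alpha})$ once $\gamma$ is chosen with $2\gamma>\alpha$.

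For states with $m<N^{\gamma}$, we have $j\ge1$ and $\pi(j,m)=O(m/N)$. On these states I bound $L(j,m)$ crudely. One candidate is $L(j,m)\le V(j,m)-V(j,m-1)+O(1)$ plus the DROP term. Since $x^*$ and $j/m$ are both bounded away from degeneracy except at $j=m$, this should give $L(j,m)=O(\log m)$; at $j=m$, DROP's price of zero loses at most $O(1)$. The contribution is then $O(N^{-1}\sum_{m<N^\gamma}m\log m)=O(N^{2\gamma-1+\epsilon})$.

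Choosing $\gamma$ slightly above $\alpha/2$ makes both pieces $o(N^{-\alpha})$, provided $2\gamma-1<-\alpha$, i.e.\ $\alpha<1/3$. To reach every $\alpha<1$, the split must be refined. I will use the sharper bound $\pi(j,m)=O((m/N)^j)$, treat $j=1$ separately, and there use the exact one-unit recursion to get a loss of order $\delta(1,m)^2/m$ instead of $O(\log m)$. With these refinements the small-$m$ block contributes $O(N^{-1+\epsilon})$, which completes the proof of \Cref{thmExp}.
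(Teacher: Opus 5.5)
Your approach is correct, but it takes a genuinely different route from the paper.

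\textbf{How the routes differ.} The paper never compares $V$ and $D$ directly. It shows separately that each converges to the explicit limit $k\log n-k-\log k!$ at rate $o(n^{-\alpha})$, and then subtracts: \Cref{DConverging} unrolls the DROP recursion with the first-sale weights $\binom{i-1}{k-1}/\binom{n}{k}$ and applies \Cref{StirlingApprox}, while \Cref{vConverging} analyzes $W(k,n)=\tilde V(k,n)-\tilde V(k-1,n)$. You instead use an exact performance-difference identity along DROP's trajectory, with hypergeometric occupation probabilities, and exploit the fact that the per-step loss is second order. In fact $L(k,n)=x^*\bigl(1-e^{u}(1-u)\bigr)$ exactly, with $x^*=\frac{k}{n-1}e^{-\delta(k,n-1)}$ and $u=\delta(k,n-1)-\log\frac{n}{n-1}$; the $n-1$ you dropped is harmless.

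\textbf{Two simplifications.} First, your $\delta(k,n)$ is precisely the paper's $W(k,n)$. \Cref{vConverging} as stated already carries the factor $n^{\alpha}$, so differencing it at $k$ and $k-1$ gives $\delta(k,n)=o(n^{-\alpha})$ for every $\alpha<1$. Step~2 is therefore already available, not an obstacle. Second, $L(j,m)$ does not depend on $N$, so $L(j,m)\le Cm^{-3+2\epsilon}$ holds for all $m\ge j$ once finitely many small $m$ are absorbed into $C$. Combined with $\pi_N(j,m)\le C'(m/N)^j$, this gives $V(K,N)-D(K,N)\le C''\sum_{j=1}^{K}N^{-j}\sum_{m\le N}m^{j-3+2\epsilon}=O(N^{-1})$ at once. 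You need no split at $N^{\gamma}$ and no crude $O(\log m)$ bound; incidentally, your first split reaches $\alpha<1/2$, not $1/3$.

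\textbf{What each route buys.} The paper's route identifies DROP's asymptotic revenue explicitly, which it needs for \Cref{table:results} and for the comparisons with SROP and the prophet. However, its rate is capped by the individual deviations of $D$ and $V$ from the common limit, which are of order $(\log n)/n$; for example, $D(1,n)=\log(n!)/n=\log n-1+\frac{\log(2\pi n)}{2n}+O(n^{-2})$. Your route never identifies $\lim D$. In exchange, the gap depends only on $\delta^2$, which yields the sharper bound $V-D=O(1/N)$. This is the exact order, since $\pi_N(1,1)=K/N$ and $L(1,1)=e^{-1}$. It would also still give every $\alpha<1$ from the much weaker input $\delta(j,m)=O(m^{-1/2})$.
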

Intuitively, the proof of this theorem relies on showing that both expressions  $D(k,n)$ (in \Cref{DConverging}) and $V(k,n)$ (in \Cref{vConverging}) converge to the same limit.

\begin{restatable}{proposition}{propD}\label{DConverging} For an exponential willingness to pay distribution $F(.)$, and any $\alpha < 1$
    \begin{align*}
        \lim_{\customer \rightarrow \infty} \left[\inventory\log(\customer) - \inventory - \log(\inventory !) - D(\inventory,\customer) \right]n^{\alpha} = 0
    \end{align*}
\end{restatable}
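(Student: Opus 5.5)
The plan is to compare $D(k,n)$ with the candidate limit $f(k,n) := k\log n - k - \log(k!)$ directly through \eqref{DICRec}, and then argue by induction on $k$. For the exponential distribution we have $\bar{q}_F(x) = -\log x$, so the DROP price in state $(k,n)$ is $\log(n/k)$. The recursion becomes
\begin{align*}
D(k,n) = \left(1-\frac{k}{n}\right) D(k,n-1) + \frac{k}{n}\left(D(k-1,n-1) + \log\frac{n}{k}\right),
\end{align*}
valid for $n \ge k$, with $D(0,\cdot)=0$. I would treat the states with $n<k$, and the value $D(k,k)$, as fixed finite initial data; they only enter through a finite number of terms.

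\textbf{Step 1 (local consistency).} Plug $f$ into the recursion and compute the residual
\begin{align*}
r(k,n) := f(k,n) - \left(1-\frac{k}{n}\right) f(k,n-1) - \frac{k}{n}\left(f(k-1,n-1) + \log\frac{n}{k}\right).
\end{align*}
Use $f(k,n-1) - f(k-1,n-1) = \log(n-1) - 1 - \log k$. A short computation then gives
\begin{align*}
r(k,n) = k\log\frac{n}{n-1} - \frac{k}{n}\left(1 + \log\frac{n}{n-1}\right) = -\frac{k}{2n^2} + O(n^{-3}).
\end{align*}
Hence $|r(k,n)| \le C_k/n^2$ for $n \ge k+1$. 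The point is that the leading $k/n$ terms cancel exactly, which is why the constants $-k-\log(k!)$ are the right ones.

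\textbf{Step 2 (error recursion and unrolling).} Let $e(k,n) := D(k,n) - f(k,n)$. Then
\begin{align*}
e(k,n) = \left(1-\frac{k}{n}\right) e(k,n-1) + \frac{k}{n}\, e(k-1,n-1) - r(k,n).
\end{align*}
Unroll this in $n$ down to $n_0 = k$. Set $\Pi_k(m,n) := \prod_{j=m+1}^{n}(1-k/j)$ and use the estimate $\Pi_k(m,n) \le C'_k (m/n)^k$. This yields
\begin{align*}
|e(k,n)| \le C'_k\left(\frac{k}{n}\right)^k |e(k,k)| + C'_k \sum_{m=k+1}^{n}\left(\frac{m}{n}\right)^k\left(\frac{k}{m}\,|e(k-1,m-1)| + \frac{C_k}{m^2}\right).
\end{align*}
The three pieces are handled separately.
\begin{itemize}
\item The initial term is $O(n^{-k}) = o(n^{-\alpha})$, since $k \ge 1 > \alpha$.
\item The residual sum is $n^{-k}\sum_m m^{k-2}$. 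This is $O(\log n/n)$ for $k=1$ and $O(1/n)$ for $k \ge 2$, so in both cases it is $o(n^{-\alpha})$ for every $\alpha<1$.
\end{itemize}

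\textbf{Step 3 (induction on $k$).} The base case is $e(0,\cdot) \equiv 0$. For $k=1$ the middle sum then vanishes, so Step 2 gives $e(1,n) = O(\log n/n)$ directly. For the inductive step, assume $|e(k-1,m)| \le \varepsilon\, m^{-\alpha}$ for all $m \ge m_0(\varepsilon)$. Split the middle sum at $m_0$.
\begin{itemize}
\item The head $m \le m_0$ contributes $O(n^{-k})$.
\item The tail contributes at most $\varepsilon\, k\, n^{-k}\sum_{m \le n} m^{k-1-\alpha}$, which is at most $\varepsilon\, \tfrac{k}{k-\alpha}\, n^{-\alpha}$.
\end{itemize}
Since $\varepsilon$ is arbitrary, $e(k,n)\, n^{\alpha} \to 0$, which is the statement of \Cref{DConverging}.

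I expect the main obstacle to be bookkeeping rather than conception. First, the product bound $\Pi_k(m,n) \lesssim (m/n)^k$ must hold uniformly, which I would get from $\log(1-x) \le -x$ together with $\sum_{j=m+1}^n 1/j \ge \log(n/m) - O(1/m)$. Second, the $o(\cdot)$ inductive hypothesis must be converted into a uniform bound, which is what the $\varepsilon$–$m_0$ split above is for. Third, the degenerate boundary state $n=k$, where the price is $0$, must be handled; it only affects finitely many initial values and is killed by the factor $(k/n)^k$.
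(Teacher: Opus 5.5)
Your proof is correct, and its skeleton matches the paper's: induction on $k$, unrolling the linear recursion in $n$, and an $\varepsilon$--$m_0$ split of the inductive term. Your unrolled weights are in fact exactly the paper's, since $\Pi_k(m,n)=\binom{m}{k}/\binom{n}{k}$ and $\Pi_k(m,n)\,\frac{k}{m}=\binom{m-1}{k-1}/\binom{n}{k}$. Moreover, $\Pi_k(m,n)=\prod_{i=0}^{k-1}\frac{m-i}{n-i}\le (m/n)^k$ holds with constant $1$, so the uniform product bound you list as an obstacle is immediate.

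The genuine difference is where the constant $-k-\log(k!)$ is verified. The paper unrolls $D$ itself into a hockey-stick-weighted average of $D(k-1,i-1)+\log(i/k)$ and inserts the inductive bounds. It then needs the separate Stirling-type \Cref{StirlingApprox} to identify the limit; that lemma is an integral comparison giving $\sum_{i\le n}\binom{i-1}{k-1}\log i/\binom{n}{k}=\log n-\frac{1}{k}+o(n^{-\alpha})$. You instead unroll the error $e=D-f$, so the only analytic input is the one-step residual $r(k,n)=-\frac{k}{2n^2}+O(n^{-3})$.

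Your route buys three things. It makes the exact cancellation of the $k/n$ terms, and hence the correctness of the constants, transparent. It dispenses with the Stirling lemma entirely. If you track the bounds, it also yields the explicit rate $e(k,n)=O(\log n/n)$, which is sharp already at $k=1$, where $D(1,n)=\log(n!)/n$. It is the same error-recursion template the paper itself adopts for the Pareto case in \Cref{propParetoD}. The paper's route, for its part, exhibits $D(k,n)$ as an explicit convex combination of conditional revenues, in line with its bounded-distribution argument in \Cref{propUniform}.

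The loose ends in your tail estimate are harmless: the shift $(m-1)^{-\alpha}$ versus $m^{-\alpha}$, the need to split at $m_0+1$, and a $(1+o(1))$ factor from bounding $\sum_{m\le n}m^{k-1-\alpha}$. They only change the constant multiplying the arbitrary $\varepsilon$.
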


\begin{restatable}{proposition}{propV}\label{vConverging} For an exponential willingness to pay distribution $F(.)$, and any $\alpha < 1$
    \begin{align*}
        \lim_{\customer \rightarrow \infty} \left[ \inventory\log(\customer) - \inventory - \log(\inventory !) - V(\inventory,\customer) \right]n^{\alpha} = 0
    \end{align*}
\end{restatable}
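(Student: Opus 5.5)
The proof rests on the fact that under $F(p)=1-e^{-p}$ the \eqref{eq:bellman} can be solved in closed form one step at a time. This turns $V$ into an explicit nonlinear recursion, and the proposition follows by induction on $\inventory$ using a linearized error recursion.

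\textbf{Step 1 (exact recursion).} Write $\Delta(\inventory,\customer)=V(\inventory,\customer)-V(\inventory-1,\customer)$. The Bellman equation becomes
\begin{align*}
V(\inventory,\customer)=V(\inventory,\customer-1)+\max_p e^{-p}\bigl(p-\Delta(\inventory,\customer-1)\bigr).
\end{align*}
The maximizer is $p=1+\Delta(\inventory,\customer-1)$. Hence
\begin{align*}
V(\inventory,\customer)=V(\inventory,\customer-1)+e^{-1-\Delta(\inventory,\customer-1)}.
\end{align*}
Let $L_\inventory(\customer)=\inventory\log\customer-\inventory-\log(\inventory!)$ and $\varepsilon_\inventory(\customer)=V(\inventory,\customer)-L_\inventory(\customer)$, with $L_0=\varepsilon_0=0$. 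Since $L_\inventory-L_{\inventory-1}=\log\customer-1-\log\inventory$, we get
\begin{align*}
e^{-1-\Delta(\inventory,\customer-1)}=\frac{\inventory}{\customer-1}\exp\bigl(-\varepsilon_\inventory(\customer-1)+\varepsilon_{\inventory-1}(\customer-1)\bigr).
\end{align*}
Also $L_\inventory(\customer)-L_\inventory(\customer-1)=\inventory/\customer+O(\customer^{-2})$. Therefore
\begin{align*}
\varepsilon_\inventory(\customer)=\varepsilon_\inventory(\customer-1)+\frac{\inventory}{\customer}\Bigl(e^{-(\varepsilon_\inventory(\customer-1)-\varepsilon_{\inventory-1}(\customer-1))}-1\Bigr)+O(\customer^{-2}).
\end{align*}

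\textbf{Step 2 (a priori boundedness).} Before linearizing, I will show that $\varepsilon_\inventory(\customer)$ is bounded in $\customer$ for each fixed $\inventory$. For $\inventory=1$ this is direct. Set $x_\customer=V(1,\customer)$, so that $e^{x_\customer}=e^{x_{\customer-1}}\exp(e^{-1-x_{\customer-1}})$. Then
\begin{align*}
e^{x_\customer}=e^{x_{\customer-1}}+e^{-1}+O(e^{-x_{\customer-1}}),
\end{align*}
which gives $e^{x_\customer}=\customer/e+O(\log \customer)$ and hence $\varepsilon_1(\customer)=O(\log\customer/\customer)$. For general $\inventory$, I will use the recursion from Step 1 as a self-correcting dynamic. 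If $\varepsilon_\inventory$ exceeds $\varepsilon_{\inventory-1}$ by a large margin, the bracket is negative and bounded below by $-1$, so $\varepsilon_\inventory$ decreases. If it falls far below, the bracket is large and positive, so $\varepsilon_\inventory$ increases. Since $\sum_\customer O(\customer^{-2})<\infty$, a barrier argument using the inductive bound on $\varepsilon_{\inventory-1}$ should show that $\varepsilon_\inventory$ stays in a bounded interval. Once boundedness is known, the convergence in Step 3 makes both errors small, and the linearization becomes legitimate.

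\textbf{Step 3 (linearized contraction and induction).} With the errors small, use $e^{-y}-1=-y+O(y^2)$ to obtain
\begin{align*}
\varepsilon_\inventory(\customer)=\Bigl(1-\frac{\inventory}{\customer}\Bigr)\varepsilon_\inventory(\customer-1)+\frac{\inventory}{\customer}\,\varepsilon_{\inventory-1}(\customer-1)+O\Bigl(\frac{\varepsilon^2}{\customer}+\frac{1}{\customer^{2}}\Bigr).
\end{align*}
Here $\varepsilon^2$ abbreviates $\varepsilon_\inventory(\customer-1)^2+\varepsilon_{\inventory-1}(\customer-1)^2$. The product $\prod_{m\le\customer}(1-\inventory/m)$ is of order $\customer^{-\inventory}$. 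Unrolling the recursion therefore gives
\begin{align*}
|\varepsilon_\inventory(\customer)|\lesssim \customer^{-\inventory}\sum_{m\le\customer}m^{\inventory}\Bigl(\frac{\inventory}{m}|\varepsilon_{\inventory-1}(m)|+\frac{1}{m^{2}}+\frac{\varepsilon^2}{m}\Bigr).
\end{align*}
Assume inductively that $\varepsilon_{\inventory-1}(m)=O(m^{-\alpha})$ for every $\alpha<1$; the base case is $\varepsilon_0=0$. The first two terms in the sum then contribute $O(\customer^{-\alpha})+O(\log\customer/\customer)$. The quadratic term is absorbed by a bootstrap: first get $\varepsilon_\inventory=o(1)$, then $O(\customer^{-\alpha/2})$, and iterate up to any $\alpha<1$. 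The $\log\customer$ losses are exactly why the statement is for $\alpha<1$ rather than $\alpha=1$. This gives $\varepsilon_\inventory(\customer)\customer^{\alpha}\to0$, which is \Cref{vConverging}.

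\textbf{Main obstacle.} I expect Step 2 to be the hardest part. The recursion is only contractive once the errors are already small, and the nonlinearity $e^{-y}$ is asymmetric: large positive gaps are corrected only slowly, because the bracket is bounded below by $-1$. I would first try a sub/supersolution comparison, checking that $L_\inventory\pm C$ (or $L_\inventory\pm C\customer^{-\alpha}$ for large $\customer$) are respectively a sub- and supersolution of the exact recursion from Step 1, using the monotonicity of the update in $V(\inventory,\customer-1)$.
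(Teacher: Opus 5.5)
There is a genuine gap between your Step 2 and Step 3: nothing in the proposal proves $\varepsilon_k(n)\to0$. Your Step 1 and the linear part of Step 3 follow the paper's structure: the exact recursion $V(k,n)=V(k,n-1)+e^{-1-\Delta(k,n-1)}$, induction on $k$, linearization, and unrolling against $\prod_m(1-k/m)\approx n^{-k}$. But Step 2 only aims at boundedness, while Step 3 starts ``with the errors small''. So the sentence ``once boundedness is known, the convergence in Step 3 makes both errors small'' is circular.

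The linearized recursion cannot supply the missing convergence. If you only know $|\varepsilon|\le B$, the remainder $O(\varepsilon^2/n)$ is of the same order as the restoring term $k\varepsilon/n$ and can cancel it. That inequality then forces no decay, so your bootstrap (``first get $\varepsilon_k=o(1)$'') has no starting point.

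This is exactly the nonlinear step the paper isolates in a separate lemma, in three stages:
\begin{enumerate}
\item It shows $W(k,n)=\varepsilon_k(n)-\varepsilon_{k-1}(n)\ge0$, using $e^{-x}\ge1-x$ and $n\log(1+1/n)\le1$.
\item It proves $W(k,n)\to0$ by a sign argument. A persistent excess $W>\eta$ would make $W$ drop by $\gtrsim k\eta/n$ per step, contradicting the divergence of $\sum 1/n$; up-crossings are controlled separately.
\item Only then does it linearize, one-sidedly, via $e^{-W}\le1-\beta W$ for small $W\ge0$.
\end{enumerate}

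The gap can be repaired inside your framework. Once $|\varepsilon_k-\varepsilon_{k-1}|\le B$, replace Taylor by the mean value theorem: $e^{-y}-1=-\theta_n y$ with $\theta_n\in[e^{-B},e^{B}]$. This gives $\varepsilon_k(n)=(1-k\theta_n/n)\varepsilon_k(n-1)+(k\theta_n/n)\varepsilon_{k-1}(n-1)+O(n^{-2})$. For $n>ke^{B}$ this is an averaging recursion whose memory of the initial value decays like $\prod_j(1-ke^{-B}/j)\to0$. Hence $\varepsilon_{k-1}\to0$ forces $\varepsilon_k\to0$, and after that your bootstrap goes through.

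Your Step 2 is sound along the comparison lines you indicate. The Bellman map is nondecreasing in $V(k,n-1)$. Take $L_k+C$ with $C\ge\sup_n|\varepsilon_{k-1}(n)|+1$ and $C\ge\varepsilon_k(k+1)$. This is a supersolution from $n=k+1$ on, because $k\log\frac{n}{n-1}\ge\frac{k}{n}\ge\frac{k}{e(n-1)}$. The lower side needs no barrier at all: $V\ge D$ together with \Cref{DConverging} already gives $\varepsilon_k(n)\ge -o(n^{-\alpha})$.
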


Thus, since both terms converge to the same value, namely $0$, their difference also vanishes. This establishes that $[D(k,n)-V(k,n)]n^\alpha$ approaches zero as $n$ approaches infinity, completing the proof of \Cref{thmExp} and highlighting the efficiency of DROP under exponential willingness-to-pay distributions. Furthermore, in \Cref{remark:optimalPrices} in the Appendix we show that the prices posted by DROP are also converging to the optimal prices.

Crucially, the theoretical guarantees established here are strictly stronger than standard multiplicative convergence. While asymptotic optimality is frequently defined by a performance ratio converging to 1, our analysis demonstrates a vanishing absolute gap (commonly referred to as regret). We show that the absolute difference between the algorithmic and optimal revenues not only shrinks to 0, but decays at a strict polynomial rate, falling faster than $n^{-\alpha}$ for any $\alpha < 1$.

A natural question arises regarding the performance of an offline algorithm, one that has full knowledge of all willingness-to-pay values beforehand, in this setting. While $V(k,n)$
denotes the revenue generated by the optimal online algorithm, our focus is on determining the maximum revenue that can be achieved by the optimal offline algorithm, often referred to as the \emph{prophet} in the theoretical Computer Science literature. 

\begin{restatable}{proposition}{propP}\label{PConverging}
    The expected revenue of the optimal offline algorithm is: 
    \begin{align*}
        kH_n - kH_{k-1} + k-1.
    \end{align*}
    Where $H_n$ is the $n$-th harmonic number and is defined as the sum of the reciprocals of the first $n$ positive integers. It can be expressed mathematically as:
    \begin{align*}
        H_n = \sum_{i=1}^n \frac{1}{i}
    \end{align*}
\end{restatable}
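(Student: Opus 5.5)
The offline optimum sells the $k$ units to the $k$ customers with the largest valuations and charges each exactly her valuation. Its expected revenue is therefore $\EX\left[\sum_{j=1}^{k} X_{(j)}\right]$, where $X_{(1)}\ge X_{(2)}\ge\dots\ge X_{(n)}$ are the order statistics of $n$ i.i.d.\ standard exponentials. (Under the Remark we take $c=1$; we also assume $k\le n$.) Charging a buyer her own valuation is clearly the best an omniscient seller can do, since no buyer pays more than $v_i$. So the proof reduces to computing the expected sum of the top $k$ order statistics.

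The key tool is the Rényi representation of exponential order statistics. By memorylessness, the spacings $X_{(j)}-X_{(j+1)}$ are independent, and $X_{(j)}-X_{(j+1)}$ is exponential with rate $j$ (with the convention $X_{(n+1)}=0$). Equivalently,
\begin{align*}
\EX\left[X_{(j)}\right]=\sum_{i=j}^{n}\frac{1}{i}=H_n-H_{j-1}.
\end{align*}
Summing over $j=1,\dots,k$ gives
\begin{align*}
\sum_{j=1}^{k}\left(H_n-H_{j-1}\right)=kH_n-\sum_{j=0}^{k-1}H_j .
\end{align*}

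The remaining step is the standard identity $\sum_{j=0}^{k-1}H_j = kH_{k-1}-(k-1)$. One proves it by swapping the order of summation: $\sum_{j=1}^{k-1}\sum_{i=1}^{j}\frac1i=\sum_{i=1}^{k-1}\frac{k-i}{i}=kH_{k-1}-(k-1)$. Substituting yields $kH_n-kH_{k-1}+k-1$, as claimed.

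The only step needing care is justifying the spacing distribution. At level $j$, the minimum of $j$ remaining exponentials above the current threshold is exponential with rate $j$, and memorylessness makes it independent of the earlier spacings. A quick sanity check confirms the formula: $k=1$ gives $H_n$, the known expected maximum.
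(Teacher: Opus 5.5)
Your proposal is correct and follows essentially the same route as the paper. Both identify the offline optimum with the expected sum of the $k$ largest of $n$ i.i.d.\ exponentials, use R\'enyi's representation (via memorylessness) to get expectation $H_n - H_{j-1}$ for the $j$-th largest, and then sum using the swap-of-summation identity $\sum_{j=1}^{k-1} H_j = kH_{k-1} - (k-1)$; your explicit argument that the omniscient seller extracts each top buyer's full valuation just makes precise a step the paper asserts without comment.
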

\begin{proof}
    The revenue of the optimal offline corresponds to the expected value of the $k$ highest values among $n$ samples drawn from an exponential distribution. According to \citet{renyi1953theory}, the expected value of the i-th order statistic from an exponential distribution is $H_n - H_{n-i}$. A brief proof outline can be constructed using the memorylessness property of the exponential distribution. Specifically, assuming  $x_n>\cdots>x_2>x_1$, we have:
    \begin{align*}
        \EX [x_{i}] &= \EX\left[\EX [x_{i}|x_{i-1}]\right] = \EX\left[\EX [x_{i} - x_{i-1}|x_{i-1}]+x_{i-1}\right]\\
        &= \frac{1}{n+1-i} + \EX\left[x_{i-1}\right] = \sum_{j=n+1-i}^n \frac{1}{j} = H_n - H_{n-i}
    \end{align*}
    Thus, the revenue of the optimal offline can be expressed as:
    \begin{align*}
        \sum_{i=0}^{k-1} \left(H_n - H_i\right) &= kH_n - \sum_{i=1}^{k-1} H_i = kH_n - \sum_{i=1}^{k-1} \frac{k-i}{i} \\
        &= kH_n - \sum_{i=1}^{k-1} \frac{k}{i} + k-1 = kH_n - kH_{k-1} + k-1
    \end{align*}
    This gives the total expected revenue for the optimal offline.$\Box$
\end{proof}
Notice that $H_n$ is closely related to $\log(n)$. In fact, it is well known that $\lim_{n \rightarrow \infty} H_n-\log(n) = \gamma$, where $\gamma \approx 0.577$ is the \textit{Euler–Mascheroni} constant. This implies that while $\lim_{n\rightarrow \infty} \frac{\text{DROP}(k,n)}{\text{OFF}(k,n)} = 1$, we still have: $$\lim_{n\rightarrow \infty} \text{OFF}(k,n) - \text{DROP}(k,n) = k\gamma - kH_{k-1}+2k-1+\log(k!) > 0$$ 
This differs from the bounded willingness-to-pay case (\Cref{thmUniform}), where DROP generated revenue equivalent to both the optimal online and optimal offline algorithms. In the exponential case, however, while DROP converges to the optimal online strategy, it is not optimal when all willingness-to-pay values are known. This indicates that, unlike in the bounded case, having information about customers' willingness to pay is valuable for the seller. This is mainly an artifact of the constant gap between the optimal online ($\approx$ DROP) and the optimal offline strategies which is given by $k\gamma - kH_{k-1}+2k-1+\log(k!)$. For a visual representation of the gap, refer to \Cref{fig:Harmonic}.
\begin{figure}[h!]
\begin{center}
\input{Figures/Harmonic}
\end{center}
\caption{The gap between $kH_n - kH_{k-1} + k-1$ (blue) and $k \log(n)-k - \log(k!)$ (in red) for $k=3$.}
\label{fig:Harmonic}
\end{figure}

Finally, the following table summarizes a comparison between our convergence bounds under exponential demand and those established in prior literature. Notably, works such as \citet{abdallah2026regime} and \citet{reiman2008asymptotically} consider more general problem settings, which lead to weaker convergence rates than those we establish in this paper. To quantify the rate of convergence, we define it as the largest value of $\alpha$ such that:
\begin{align*}
    \lim_{n\rightarrow \infty}  [V(n)-D(n)]n^{\alpha} = 0.
\end{align*}
where $V(n)$ denotes the optimal revenue and $D(n)$ represents the revenue achieved by the proposed algorithm. A rate $\alpha > -1$ corresponds to sublinear regret.

\begin{table}[htbp]
\centering
\small
\renewcommand{\arraystretch}{1.5}
\begin{tabular}{|c|c|}
\hline
Paper &  \textbf{$\alpha$}\\
\hline
 \citet{reiman2008asymptotically} & $-\frac{1}{2}$ \\
\hline
\citet{Gvr} & $-\epsilon$ \\
\hline
\citet{abdallah2026regime} & $0$ \\
\hline
This Paper & $1-\epsilon$ \\
\hline
\end{tabular}
\caption{Comparison of convergence rates across different works for the case with exponential demand. The rate $\alpha$ is defined such that $\lim_{n \to \infty} [V(n) - D(n)]n^{\alpha} = 0$.}
\label{tab:convergence-rates}
\end{table}

\subsection{Numerical Experiments and Small-Market Refinements }\label{sec:numerical}
In this subsection, we demonstrate the robustness of our approach in finite-sample settings, showing that a simple algorithmic refinement to DROP yields strong performance even when the market size $N$ is small.

As mentioned at the end of Section $\ref{sec:model}$, the prices set by DROP are not meaningful for small $n$ under certain regimes, including uniform and exponential WTP. To address this issue, we introduce a refined version of the DROP algorithm, called DROP*. Let $p^M$ represent the monopoly price for one item and one buyer defined as $p^M = \argmax_p p(1-F(p))$. DROP* ensures that the prices it sets are not below the monopoly price. More specifically, if DROP sets the price $p$ in a given state, DROP* will set the price $\max(p,p^M)$ instead.
\begin{example}
    For a willingness-to-pay uniformly distributed over $[0,\bar{p}]$, the optimal monopoly price is $p^M = \frac{\bar{p}}{2}$. In the case of an exponentially distributed willingness-to-pay, the optimal price is  $p^M = 1$.
\end{example}
In \Cref{fig:uniform} we compare various pricing strategies under the uniform distribution. To enhance clarity, we limit the analysis to $n \leq 100$. As benchmarks, the performance of the monopoly pricing $p^*=0.5$ is represented by the red curve, while SROP is depicted in green.
% \begin{figure}
%     \centering
%     \include{Figures/uniform}
%     \caption{The optimality gap of DROP (in blue) and DROP*(in black) for $k=5$ and willingness to pay $\sim U[0,1]$ is shown. The largest gap occurs at $n = 5$ with a value of $0.6$ for DROP and $0.1$ for DROP*. The monopoly pricing gap approaches $2.5$ as all 5 products are sold at a price of $0.5$.}
%     \label{fig:uniform}
% \end{figure}
In \Cref{fig:exponential} the gap between each strategy and the optimal online performance is illustrated under the assumption of an exponential distribution.
% \begin{figure}
%     \centering
%     \include{Figures/exponential}
%     \caption{The optimality gap of DROP (in blue) and DROP*(in black) for $k=5$ and willingness to pay $\sim exp(1)$ is illustrated. The largest gap value is  $0.73$ for DROP and $0.22$ for DROP*. As a benchmark, the performance of SROP is represented in green.}
%     \label{fig:exponential}
% \end{figure}

\begin{figure}[h!]
    \centering
    \begin{subfigure}[t]{0.48\textwidth}
        \centering
        \include{Figures/uniform}
        \caption{The optimality gap of DROP (in blue) and DROP* (in black) for $k=5$ with willingness to pay $\sim U[0,1]$. The largest gap occurs at $n = 5$, with a value of $0.6$ for DROP and $0.1$ for DROP*. The monopoly pricing gap approaches $2.5$ as all 5 products are sold at price $0.5$.}
        \label{fig:uniform}
    \end{subfigure}
    \hfill
    \begin{subfigure}[t]{0.48\textwidth}
        \centering
        \include{Figures/exponential}
        \caption{The optimality gap of DROP (in blue) and DROP* (in black) for $k=5$ with willingness to pay $\sim \exp(1)$. The largest gap is $0.73$ for DROP and $0.22$ for DROP*. SROP performance is shown in green.}
        \label{fig:exponential}
    \end{subfigure}
    \caption{Comparison of optimality gaps for DROP, DROP*, and SROP under two demand distributions.}
\end{figure}

\section{Tail-Aware Pricing for Heavy-Tailed Demand } \label{sec:pareto}
In this section, we investigate the vulnerability of standard run-out rate heuristics when faced with heavy-tailed valuation distributions. We demonstrate that under a heavy-tailed regime, the standard DROP algorithm fails to achieve asymptotic optimality because it does not capitalize on the outsized revenue potential of extreme valuations at the tail. To overcome this limitation, we introduce the Tail-Aware DROP (TA-DROP) algorithm, an augmented heuristic that intentionally inflates prices to match the optimal online policy and restore strict convergence.

We model customer willingness-to-pay using the Pareto distribution. The Pareto distribution is a particularly good choice for this analysis because its structural properties yield clean, closed-form expressions for both the optimal dynamic policy and our heuristic approximations.

\begin{definition}\label{def:Pareto}
    A willingness-to-pay distribution follows a Pareto distribution if its CDF takes the form $F(p) = 1 - \left(\frac{c}{p}\right)^\beta$ for all $p \ge c$, where $c > 0$ is the scale parameter and $\beta > 0$ is the shape parameter.
\end{definition}

\begin{remark}
    This Pareto formulation corresponds exactly to the constant elasticity demand model analyzed by \citet{mcafee2008dynamic}. Under this distribution, the shape parameter $\beta$ dictates the price elasticity of demand, which remains strictly constant across all price levels.
\end{remark}

Similar to the exponential case, without loss of generality, we can normalize the scale parameter to $c=1$ by rescaling the currency units. Therefore, for the remainder of this analysis, we assume $F(p) = 1 - p^{-\beta}$ for $p \ge 1$. Furthermore, we restrict our attention to the regime where $\beta > 1$. When $\beta \leq 1$, the distribution has an infinite mean, implying that the optimal strategy would trivially degenerate to setting an infinite price to extract infinite expected revenue. 

\subsection{Optimal Dynamic Pricing}
To find the optimal price under Pareto WTP, we define the marginal opportunity cost of inventory as $\Delta V = V(k, n-1) - V(k-1, n-1)$. By taking the derivative of the Bellman equation \eqref{eq:bellman} with respect to price $p$, we obtain the first-order condition:
\begin{equation*}
    p^{-\beta} - \beta p^{-\beta-1} \left( p - \Delta V \right) = 0
\end{equation*}
Solving this equation for $p$ isolates the optimal pricing policy, revealing a simple fractional markup over the opportunity cost of the inventory:
\begin{equation*}
    p^* = \frac{\beta}{\beta - 1} \Delta V
\end{equation*}

Substituting this optimal price back into the Bellman recursion allows us to express the value function entirely in terms of the marginal value of inventory:
\begin{align}\label{eq:diffeq}
V(k,n) &= (1-p^{-\beta})V(k,n-1) + p^{-\beta}\left(V(k-1,n-1) + \frac{\beta}{\beta - 1} \Delta V\right) \nonumber\\
&= V(k,n-1) + p^{-\beta}\left(\frac{\beta}{\beta - 1} \Delta V - \Delta V \right) \nonumber\\
&= V(k,n-1) + \left(\frac{\beta}{\beta - 1} \Delta V\right)^{-\beta} \left(\frac{\Delta V}{\beta-1}\right) \nonumber\\
&= V(k,n-1) + \frac{\left(\beta - 1\right)^{\beta-1}}{\beta^\beta}(\Delta V)^{1-\beta}
\end{align}

To build intuition for the asymptotic behavior, consider the continuous approximation for a single unit of inventory ($k=1$). By substituting $V(1, n) - V(1, n-1) \approx dV/dn$ and $\Delta V = V$, we obtain the differential equation:
\begin{equation*}
    \beta V^{\beta-1} dV \approx \left(1 - \frac{1}{\beta}\right)^{\beta-1} dn
\end{equation*}
Integrating this yields the asymptotic structural scaling of the value function and, consequently, the optimal price:
\begin{align}\label{eq:optPrice}
    V(1,n) \approx n^{\frac{1}{\beta}}\left(1 - \frac{1}{\beta}\right)^{1-\frac{1}{\beta}} \quad \implies \quad p^* \approx n^{\frac{1}{\beta}}\left(1 - \frac{1}{\beta}\right)^{-\frac{1}{\beta}}
\end{align}
More generally, by assuming that in the asymptotic regime $V(k,n) \approx C_k n^{\frac{1}{\beta}}$ with $C_0 = 0$, from \eqref{eq:diffeq} we get:
\begin{align*}
    \frac{d C_k n^{\frac{1}{\beta}}}{dn} = \frac{C_k}{\beta} n^{\frac{1}{\beta}-1} =  \frac{\left(\beta - 1\right)^{\beta-1}}{\beta^\beta}\left((C_k-C_{k-1}) n^{\frac{1}{\beta}}\right)^{1-\beta}.
\end{align*}
This gives a non-linear recursive equation for $C_k$:
\begin{align}\label{eq:cRecursion}
    &C_k = \left((C_k-C_{k-1})\frac{\beta}{\beta-1}\right)^{1-\beta}\nonumber\\ \iff& C_{k-1} = C_k-\frac{\beta-1}{\beta}C_k^{\frac{1}{1-\beta}}
\end{align}
Note that the right-hand side is increasing in $C_k$ and for $C_k = C_{k-1}$ the left-hand side will be greater. Furthermore, for $C_k = \infty$ the right-hand side will approach infinity, therefore there exists a unique solution to the above equation with $C_k >C_{k-1}$ as one expects.
\begin{remark}\label{rmk:ck<k}
    $\lim_{k\rightarrow \infty} C_k - k^{1-\frac{1}{\beta}} = 0$ which means $C_k$ is always less than $k$. (See Theorem 2 at \cite{mcafee2008dynamic})
\end{remark}

It should be noted that the derivations in this section rely on continuous approximations and are primarily heuristic in nature. Rather than establishing rigorous performance bounds, this exercise serves to build structural intuition regarding the optimal pricing trajectory and to provide a qualitative benchmark against the DROP algorithm. Due to the analytical intractability of the general multi-item setting, we restrict our focus in the subsequent subsection to the case $k=1$, which allows for a precise, closed-form comparison between the DROP prices and this (approximately) derived optimal price.

\subsection{DROP for $k=1$}
In this subsection, we demonstrate the shortcomings of DROP under heavy-tailed demand. Note that under the standard Dynamic Run-Out Pricing (DROP) heuristic, the price is dynamically set such that the probability of a sale exactly matches the remaining run-out rate of the inventory. Setting this condition yields $1 - F(p) = p^{-\beta} = \frac{k}{n}$. Solving for $p$ provides a beautifully compact, closed-form expression for the policy:
\begin{equation*}
    p^{DROP} = \left(\frac{n}{k}\right)^{\frac{1}{\beta}}
\end{equation*}
Therefore, the recursion for $D(1,n)$ will look like:
\begin{align*}
    D(1,\customer) =& 
    \left(1-\frac{1}{\customer} \right)
    D(1,\customer-1) + \frac{1}{\customer}\left( D(0,\customer-1) + n^{\frac{1}{\beta}}\right)\\
    =& \frac{\sum_{i=1}^n i^{\frac{1}{\beta}}}{n}=\frac{\beta}{1+\beta}n^{\frac{1}{\beta}}.
\end{align*}
This value, however, is strictly less than the theoretical benchmark $n^{\frac{1}{\beta}}\left(1 - \frac{1}{\beta}\right)^{1-\frac{1}{\beta}}$ derived in \eqref{eq:optPrice}.\footnote{This strict inequality follows from straightforward algebraic rearrangement. By substituting $x = \beta-1$, the relationship is established by invoking the property that the function $\left(1-\frac{1}{x+1}\right)^x$ is strictly decreasing for all $x>0$.} This indicates that, in sharp contrast to the bounded and light-tailed regimes, the standard DROP algorithm fails to achieve a vanishing optimality gap. 

A direct comparison between $p^{\text{DROP}}$ and our derived optimal price $p^*$ for $k=1$ in \eqref{eq:optPrice} immediately reveals that the standard DROP heuristic severely underprices the inventory. Because the Pareto distribution exhibits a heavy right tail, the probability of securing a sale at inflated prices decays much more gradually than it would under an exponential distribution. By rigidly matching the standard fluid run-out rate, the DROP heuristic ignores the immense option value embedded in waiting for extreme valuations.

\subsection{TA-DROP Heuristic}

To correct this, we introduce the \textbf{Tail-Aware DROP (TA-DROP)}. Rather than matching the exact run-out rate, using \eqref{eq:optPrice}, TA-DROP intentionally inflates the price by a factor :
\begin{align*}
    p^{TA} = \frac{\beta}{\beta - 1} \left(C_k-C_{k-1}\right)n^{\frac{1}{\beta}} = C_k^{\frac{1}{1-\beta}}n^{\frac{1}{\beta}}.
\end{align*}
Or equivalently TA-DROP restricts the target sale probability by a factor of $\frac{\beta-1}{\beta}$:
\begin{equation}
    1 - F(p^{TA}) = \frac{C_k^{\frac{\beta}{\beta-1}}}{n}.
\end{equation}
By artificially lowering the target sale probability, TA-DROP forces the platform to quote higher prices, protecting the inventory for high-valuation arrivals. Note that for extremely heavy tails ($\beta \to 1^+$), the multiplier approaches zero, causing the heuristic to become hyper-conservative and set prices far above the standard run-out pricing. 

This tail-aware adjustment is sufficient to restore asymptotic optimality with a strict polynomial convergence rate, as formalized in the following theorem.

\begin{restatable}{theorem}{thmPareto}\label{thmPareto}
    For a Pareto willingness-to-pay distribution $F(p)$ with shape parameter $\beta > 1$, and for any convergence exponent $\alpha < 1-\frac{1}{\beta}$, the difference between the optimal online policy and the TA-DROP heuristic satisfies:
    \begin{align*}
    \lim_{n \rightarrow \infty} \left[V(k,n) - D_{TA}(k,n)\right] n^{\alpha} = 0
    \end{align*}
\end{restatable}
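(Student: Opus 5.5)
Following the strategy used for \Cref{thmExp}, we will show that $V(k,n)$ and $D_{TA}(k,n)$ both lie within $o(n^{-\alpha})$ of the common limit $C_k n^{1/\beta}$, where $C_k$ is the unique solution of \eqref{eq:cRecursion}. The argument is an induction on $k$, with $C_0=0$ and $V(0,\cdot)=D_{TA}(0,\cdot)=0$ as the base case. The point of the construction is that TA-DROP sets exactly the first-order-optimal price $\frac{\beta}{\beta-1}\Delta$ for the \emph{idealized} marginal value $\Delta=(C_k-C_{k-1})n^{1/\beta}$. Hence both recursions are perturbations of the same scalar recursion, $x_n = x_{n-1} + g(\text{increment})$.

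\textbf{Step 1 (heuristic recursion).} Write $D_{TA}(k,n)=C_k n^{1/\beta}+e_k(n)$. Let $q_n=C_k^{\beta/(\beta-1)}/n$ be the TA-DROP sale probability. Then
\begin{align*}
D_{TA}(k,n) = D_{TA}(k,n-1) + q_n\left(p^{TA}_n - \left[D_{TA}(k,n-1)-D_{TA}(k-1,n-1)\right]\right).
\end{align*}
Substituting the ansatz, the leading terms cancel exactly by \eqref{eq:cRecursion}. Indeed, $q_n p^{TA}_n - q_n (C_k-C_{k-1})n^{1/\beta} = \frac{C_k}{\beta}n^{1/\beta-1}$ is precisely the derivative of $C_k n^{1/\beta}$. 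Three residuals remain:
\begin{itemize}
\item the Taylor error $C_k\left(n^{1/\beta}-(n-1)^{1/\beta}\right)-\frac{C_k}{\beta}n^{1/\beta-1}=O(n^{1/\beta-2})$;
\item the shift error from evaluating at $n-1$ instead of $n$, also $O(n^{1/\beta-2})$;
\item the propagated term $(1-q_n)e_k(n-1)+q_n e_{k-1}(n-1)$.
\end{itemize}
This gives a linear recursion $e_k(n)=(1-a/n)e_k(n-1)+(a/n)e_{k-1}(n-1)+O(n^{1/\beta-2})$ with $a=C_k^{\beta/(\beta-1)}>0$. By the induction hypothesis $e_{k-1}(n)=O(n^{-\alpha})$, so summing the recursion shows $e_k(n)=O(n^{1/\beta-1})+O(n^{-\alpha'})$ for any $\alpha'<1-1/\beta$. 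For $k=1$ this can be checked directly: $D_{TA}(1,n)$ is an explicit weighted sum, analogous to the computation of $D(1,n)$ above.

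\textbf{Step 2 (optimal recursion).} For $V$ we use \eqref{eq:diffeq}:
\begin{align*}
V(k,n)=V(k,n-1)+\kappa\,\Delta V^{1-\beta}, \qquad \kappa=\frac{(\beta-1)^{\beta-1}}{\beta^\beta}, \qquad \Delta V = V(k,n-1)-V(k-1,n-1).
\end{align*}
Write $V(k,n)=C_k n^{1/\beta}+f_k(n)$ and linearize $\Delta V^{1-\beta}$ around $(C_k-C_{k-1})n^{1/\beta}$. The zeroth-order term again cancels via \eqref{eq:cRecursion}. The first-order coefficient is $(1-\beta)\kappa\left((C_k-C_{k-1})n^{1/\beta}\right)^{-\beta}$, which equals $-a/n$ after using \eqref{eq:cRecursion}. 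So $f_k$ satisfies the same contracting linear recursion as $e_k$, plus a quadratic remainder $O\!\left(n^{-1-1/\beta}(f_k-f_{k-1})^2\right)$.

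There is an alternative that sidesteps linearization. Since TA-DROP is feasible, $D_{TA}\le V$, so only an upper bound on $V$ is needed. One can obtain it by showing that $U(k,n)=C_k n^{1/\beta}+Mn^{-\alpha}$ is a supersolution of the Bellman operator for large $n$, and then invoking a comparison principle.

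\textbf{Main obstacle.} The hard part is the nonlinearity in Step 2. The linearization is valid only if we already know a priori that $\Delta V$ is comparable to $(C_k-C_{k-1})n^{1/\beta}$; that is, that $V(k,n)/n^{1/\beta}\to C_k$ with $C_k-C_{k-1}$ bounded away from $0$. I would try the following two-stage approach first.
\begin{enumerate}
\item A crude bound via monotonicity and induction. Use the squeeze $D_{TA}\le V\le$ (a supersolution built from $(C_k+\delta)n^{1/\beta}$) to get $V=C_k n^{1/\beta}(1+o(1))$.
\item Bootstrap this into the polynomial rate using the contraction factor $(1-a/n)$.
\end{enumerate}

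Finally, subtract the two expansions: $V-D_{TA}=f_k-e_k=o(n^{-\alpha})$. Note that the overall rate $1-1/\beta$ is limited by the $n^{1/\beta-2}$ residuals, which accumulate through a recursion whose contraction is only of order $1/n$.
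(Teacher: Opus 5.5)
Your proposal is correct and has the same skeleton as the paper. You show that both $D_{TA}(k,n)$ and $V(k,n)$ lie within $o(n^{-\alpha})$ of $C_kn^{1/\beta}$ (\Cref{propParetoD} and \Cref{propParetoV}), by induction on $k$. Your Step 1 is exactly the paper's proof of \Cref{propParetoD}: the cancellation via \eqref{eq:cRecursion}, then the linear recursion $e_k(n)=(1-a/n)e_k(n-1)+(a/n)e_{k-1}(n-1)+O(n^{1/\beta-2})$.

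The genuine difference is how you handle the nonlinear optimal recursion. The paper works directly on $\epsilon_n=V(k,n)-C_kn^{1/\beta}$ in three stages:
\begin{itemize}
\item a tangent-line lower bound, with a recursion for $(-\epsilon_n)^+$ fed into \Cref{lem:Lower};
\item a summability argument showing $\epsilon_n$ and $d_n$ stay bounded;
\item the second-order Taylor bound, with a contraction on stretches where $\epsilon_n,d_n>0$.
\end{itemize}
You replace the entire lower-bound half by feasibility, $D_{TA}\le V$. This suffices because the theorem only needs $V-D_{TA}\le o(n^{-\alpha})$, and $V-D_{TA}\ge 0$ holds automatically. For the upper bound, you use that $T(x,y)=\max_p[F(p)x+(1-F(p))(p+y)]$ is monotone in both arguments, so a supersolution dominates $V$ by comparison. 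Once $V(k,n)=C_kn^{1/\beta}(1+o(1))$ is known from both sides, the linearized recursion holds with a coefficient $a_n\to a$. The Step 1 summation then gives the rate, with no case split on the sign of $d_n$. Your route is shorter. The paper's route characterizes $V$ without leaning on a good feasible heuristic.

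Two details should be made explicit when you write this out.

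First, every summation closes only because $a=C_k^{\beta/(\beta-1)}\ge C_1^{\beta/(\beta-1)}=1-\frac1\beta>\alpha$, using that $C_k$ increases in $k$. Without this, the transient and the forcing decay only like $n^{-a}$. For $k=1$ you have $a=1-\frac1\beta$ exactly, which adds a harmless factor $\log n$ to $e_1(n)$.

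Second, take the crude supersolution as $(C_k+\eta)n^{1/\beta}+M$, with $\eta$ larger than the slack $\eta'$ used at level $k-1$.
\begin{itemize}
\item The constant $M$ only enlarges $\Delta U$, which lowers $\kappa(\Delta U)^{1-\beta}$. So the threshold $N_0(\eta,\eta')$ beyond which the supersolution inequality holds is independent of $M$.
\item You can then choose $M\ge V(k,N_0)$ and start the comparison.
\end{itemize}
The one-shot candidate $C_kn^{1/\beta}+Mn^{-\alpha}$ (with $\alpha>0$) cannot be started without this crude stage. Its supersolution inequality holds only while $Mn^{-\alpha-1/\beta}$ stays below a constant depending on $\alpha,\beta,k$. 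Matching it to $V(k,N_0)$ therefore already presupposes that $V(k,N_0)-C_kN_0^{1/\beta}$ is a small multiple of $N_0^{1/\beta}$, which is exactly what your two-stage plan supplies.
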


Similar to \Cref{thmExp}, we find it more tractable to prove that both the TA-DROP and the optimal revenues converge to the same asymptotic limit $C_kn^{\frac{1}{\beta}}$.
\begin{restatable}{proposition}{propParetoD}\label{propParetoD} For a Pareto willingness to pay distribution $F(.)$ with shape parameter $\beta > 1$, and any $\alpha < 1-\frac{1}{\beta}$ we have:
    \begin{align*}
        \lim_{\customer \rightarrow \infty} \left[ C_kn^{\frac{1}{\beta}} - D_{TA}(\inventory,\customer) \right]n^{\alpha} = 0,
    \end{align*}
    where $C_k$ is defined in \eqref{eq:cRecursion}.
\end{restatable}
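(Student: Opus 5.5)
The plan is to run an induction on $k$ and study the error $e_k(n) := D_{TA}(k,n) - C_k n^{\frac{1}{\beta}}$ directly through the TA-DROP recursion. Write $a_k := C_k^{\frac{\beta}{\beta-1}}$, so that TA-DROP sells with probability $a_k/n$ at price $p^{TA}=C_k^{\frac{1}{1-\beta}}n^{\frac{1}{\beta}}$. Both are well defined once $n \ge n_0 := \lceil a_k\rceil$. The recursion reads
\begin{align*}
D_{TA}(k,n) = D_{TA}(k,n-1) + \frac{a_k}{n}\Big(p^{TA} - \big(D_{TA}(k,n-1)-D_{TA}(k-1,n-1)\big)\Big).
\end{align*}
Two identities follow from \eqref{eq:cRecursion}:
\begin{align*}
\frac{a_k}{n}\,p^{TA} = C_k n^{\frac{1}{\beta}-1}, \qquad a_k\,(C_k-C_{k-1}) = \frac{\beta-1}{\beta}\,C_k .
\end{align*}
Substituting $D_{TA}(j,m)=C_j m^{\frac{1}{\beta}}+e_j(m)$ then gives the linear error recursion
\begin{align*}
e_k(n) = \Big(1-\frac{a_k}{n}\Big)e_k(n-1) + \frac{a_k}{n}\,e_{k-1}(n-1) + r_k(n),
\end{align*}
with forcing term
\begin{align*}
r_k(n) = C_k\Big[(n-1)^{\frac{1}{\beta}} - n^{\frac{1}{\beta}} + n^{\frac{1}{\beta}-1} - \tfrac{\beta-1}{\beta}\,n^{-1}(n-1)^{\frac{1}{\beta}}\Big].
\end{align*}
A Taylor expansion of $(n-1)^{\frac{1}{\beta}}$ shows that the $n^{\frac{1}{\beta}-1}$ terms cancel exactly, since $-\frac{1}{\beta}+1-\frac{\beta-1}{\beta}=0. $ Hence $r_k(n)=O(n^{\frac{1}{\beta}-2})$. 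This cancellation is precisely what the choice of $C_k$ was designed for.

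Next I solve the linear recursion by unrolling it from $n_0$. Let $\Pi(m,n)=\prod_{j=m+1}^{n}(1-a_k/j)$, which is $\Theta\big((m/n)^{a_k}\big)$ by comparing $\sum \log(1-a_k/j)$ with $-a_k\log(n/m)$. Then
\begin{align*}
e_k(n) = \Pi(n_0,n)\,e_k(n_0) + \sum_{m=n_0+1}^{n}\Pi(m,n)\Big(\frac{a_k}{m}e_{k-1}(m-1) + r_k(m)\Big).
\end{align*}
The base case is $e_0\equiv 0$. Assume inductively that $|e_{k-1}(m)| \le C m^{-\alpha'}$ for every $\alpha'<1-\frac{1}{\beta}$. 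Then the forcing is $O(m^{-1-\alpha'})+O(m^{\frac{1}{\beta}-2})$, and the standard estimate
\begin{align*}
\sum_m (m/n)^{a}\, m^{-1-\gamma} = O\big(n^{-\min(a,\gamma)}\log n\big)
\end{align*}
bounds the sum by $O\big(n^{-\min(a_k,\alpha',1-\frac{1}{\beta})}\log n\big)$. The initial-condition term is $O(n^{-a_k})$, since $e_k(n_0)$ is a fixed finite number however TA-DROP is specified for $n<n_0$.

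The step I expect to be the main obstacle is showing that the decay exponent $a_k$ never limits the rate below $1-\frac{1}{\beta}$. For $k=1$ we have $C_0=0$, so \eqref{eq:cRecursion} gives $C_1^{\frac{\beta}{\beta-1}}=\frac{\beta-1}{\beta}$, that is $a_1 = 1-\frac{1}{\beta}$ exactly. In this case the transient from the initial condition decays at precisely the critical rate. The borderline sum contributes an extra $\log n$, which is why the statement requires the strict inequality $\alpha<1-\frac{1}{\beta}$. For $k\ge 2$ I use that $C_k$ is increasing in $k$, which follows from the uniqueness and monotonicity argument after \eqref{eq:cRecursion}. Hence $a_k\ge a_1$, and every term is $O(n^{-(1-\frac{1}{\beta})}\log n)$. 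Multiplying by $n^{\alpha}$ with $\alpha<1-\frac{1}{\beta}$ sends the error to $0$, which closes the induction and proves \Cref{propParetoD}.

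The remaining care is technical. I need two-sided product bounds for $\Pi(m,n)$ that are uniform in $m\ge n_0$. I also need to track the constants through the $k$-step induction, which is harmless because $k$ is fixed.
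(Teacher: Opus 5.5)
Your proposal is correct and follows essentially the same route as the paper. Both proofs induct on $k$, derive the linear error recursion $e_k(n)=\bigl(1-\frac{a_k}{n}\bigr)e_k(n-1)+\frac{a_k}{n}e_{k-1}(n-1)+O(n^{\frac{1}{\beta}-2})$ from the cancellation built into \eqref{eq:cRecursion}, and unroll it using the product bound $\Theta\bigl((m/n)^{a_k}\bigr)$. Your version is in fact more careful: it explicitly handles the initial-condition transient, and it states the requirement $a_k\ge a_1=1-\frac{1}{\beta}$, which the paper's final summation uses only implicitly.
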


\begin{restatable}{proposition}{propParetoV}\label{propParetoV} For a Pareto willingness to pay distribution $F(.)$ with shape parameter $\beta > 1$, and any $\alpha < 1-\frac{1}{\beta}$ we have:
    \begin{align*}
        \lim_{\customer \rightarrow \infty} \left[ C_kn^{\frac{1}{\beta}} - V(\inventory,\customer) \right]n^{\alpha} = 0,
    \end{align*}
    where $C_k$ is defined in \eqref{eq:cRecursion}.
\end{restatable}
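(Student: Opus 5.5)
The plan is to prove \Cref{propParetoV} by induction on $k$. The base case $k=0$ is trivial, since $V(0,n)=0=C_0n^{1/\beta}$. The engine of the argument is the exact recursion \eqref{eq:diffeq}:
\begin{align*}
V(k,n)-V(k,n-1)=\gamma\,\bigl(V(k,n-1)-V(k-1,n-1)\bigr)^{1-\beta},\qquad \gamma:=\frac{(\beta-1)^{\beta-1}}{\beta^\beta}.
\end{align*}
Define the error $e_k(n):=V(k,n)-C_kn^{1/\beta}$. Assume inductively that $e_{k-1}(n)n^{\alpha}\to 0$ for every $\alpha<1-\frac1\beta$. The goal is a linear recursion for $e_k$, driven by $e_{k-1}$, whose solution inherits the same decay.

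\textbf{Step 1 (a priori bound).} First I show $e_k(n)=o(n^{1/\beta})$, so that linearization is legitimate. Set $\Delta_{\pm}=(C_k\pm\varepsilon-C_{k-1})n^{1/\beta}$. For $x$ large relative to $a$, the map $x\mapsto x+\gamma(x-a)^{1-\beta}$ is increasing, because its derivative is $1-(\beta-1)\gamma(x-a)^{-\beta}>0$. This allows a comparison argument with barriers $(C_k\pm\varepsilon)n^{1/\beta}$. The recursion \eqref{eq:cRecursion} says $\frac{C_k}{\beta}=\gamma\bigl(C_k-C_{k-1}\bigr)^{1-\beta}$. Hence, for large $n$, the barrier $(C_k+\varepsilon)n^{1/\beta}$ is a supersolution: its increment $\approx\frac{C_k+\varepsilon}{\beta}n^{1/\beta-1}$ exceeds $\gamma\Delta_+^{1-\beta}$. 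Likewise $(C_k-\varepsilon)n^{1/\beta}$ is a subsolution. The term $V(k-1,\cdot)$ is controlled by the induction hypothesis.

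There is one snag: once the trajectory lies above or below a barrier, it must stay there. I would handle this by noting that the increments are $O(n^{1/\beta-1})\to0$. If the gap to a barrier shrinks, the drift pushes the trajectory back, because the increment is decreasing in $V(k,n-1)$. This yields $\limsup|e_k(n)|n^{-1/\beta}\le\varepsilon$ for every $\varepsilon>0$. I expect this step to be the main obstacle, since it requires care with starting values and with the fact that $\Delta V\to\infty$.

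\textbf{Step 2 (linearization).} Write $\Delta V=(C_k-C_{k-1})n^{1/\beta}(1+\delta_n)$, where
\begin{align*}
\delta_n=\frac{e_k(n-1)-e_{k-1}(n-1)+O(n^{1/\beta-1})}{(C_k-C_{k-1})n^{1/\beta}}\to 0 .
\end{align*}
Expanding $(1+\delta_n)^{1-\beta}$, using the identity $\frac{C_k}{\beta}=\gamma(C_k-C_{k-1})^{1-\beta}$, and using $C_k\bigl(n^{1/\beta}-(n-1)^{1/\beta}\bigr)=\frac{C_k}{\beta}n^{1/\beta-1}+O(n^{1/\beta-2})$, I get
\begin{align*}
e_k(n)=\Bigl(1-\frac{\lambda_k}{n}\Bigr)e_k(n-1)+\frac{\lambda_k}{n}e_{k-1}(n-1)+r_n .
\end{align*}
Here $r_n=O(n^{1/\beta-2})+O\bigl(n^{-1-1/\beta}(e_k^2+e_{k-1}^2)\bigr)$. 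The coefficient is $\lambda_k=\frac{(\beta-1)C_k}{\beta(C_k-C_{k-1})}=C_k^{\beta/(\beta-1)}$, by \eqref{eq:cRecursion}. Since $C_k$ is increasing in $k$ and $C_1^{\beta/(\beta-1)}=1-\frac1\beta$, we have $\lambda_k\ge 1-\frac1\beta>\alpha$. By Step 1, the quadratic remainder is $o(n^{-1})\,|e_k|$ plus $o(n^{-1-\alpha})$, so it can be absorbed into the contraction factor, replacing $\lambda_k$ by $\lambda_k-\eta$ with $\eta$ small enough that $\lambda_k-\eta>\alpha$.

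\textbf{Step 3 (discrete Gr\"onwall).} I use the elementary lemma: if $x_n=(1-\lambda/n)x_{n-1}+f_n$ with $f_n=o(n^{-1-\alpha})$ and $\lambda>\alpha$, then $x_nn^{\alpha}\to0$. To prove it, unroll the recursion using $\prod_{j=m}^{n}(1-\lambda/j)\asymp (m/n)^{\lambda}$. The sum is then bounded by $n^{-\lambda}\sum_m m^{\lambda}f_m=o(n^{-\alpha})$, and the homogeneous part decays like $n^{-\lambda}$.

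The forcing has two pieces. The piece $\frac{\lambda_k}{n}e_{k-1}(n-1)$ is $o(n^{-1-\alpha})$ by the induction hypothesis. The piece $O(n^{1/\beta-2})=O\bigl(n^{-1-(1-1/\beta)}\bigr)$ is $o(n^{-1-\alpha})$ precisely because $\alpha<1-\frac1\beta$; this is where the threshold in the statement comes from. Hence $e_k(n)n^{\alpha}\to0$, which closes the induction.
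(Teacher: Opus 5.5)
Your proposal is correct. Its core is the same as the paper's: induction on $k$, linearization of the exact recursion \eqref{eq:diffeq} around $C_kn^{1/\beta}$ with contraction rate $C_k^{\beta/(\beta-1)}\ge C_1^{\beta/(\beta-1)}=1-\frac{1}{\beta}>\alpha$, and a discrete Gr\"onwall bound (the paper's \Cref{lem:Lower}). The threshold $1-\frac{1}{\beta}$ comes from the $O(n^{1/\beta-2})$ discretization error, exactly as you say.

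Where you differ is in how you justify the linearization. You first prove $e_k(n)=o(n^{1/\beta})$ by comparison with the barriers $(C_k\pm\varepsilon)n^{1/\beta}$. You then run a single two-sided linear recursion for $|e_k|$. Since the quadratic Taylor term carries an extra factor $n^{-1/\beta}$, this weak a priori bound is enough to absorb it.

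The paper never uses a comparison principle. In its Step 1, convexity of $x\mapsto x^{1-\beta}$ gives a tangent-line lower bound for every value of $d_{n-1}$, which yields a contraction for $(-\epsilon_n)^+$ with no a priori control. In Step 2 it shows by a summability argument that $\epsilon_n$ and $d_n$ are bounded above. In Step 3 it uses the second-order expansion with that bound, but only on stretches where $\epsilon_n,d_n>0$.

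Your route pays for the barrier entry-time argument, which you rightly flag as the delicate point. It does go through. While the trajectory is on the wrong side of a barrier, the gap shrinks by at least a constant times $n^{1/\beta-1}$ per step, and these steps form a divergent sum. Once the trajectory is inside the strip, monotonicity of $x\mapsto x+\gamma(x-a)^{1-\beta}$ (valid there, since $\Delta V\to\infty$) keeps it there. In exchange, you replace the paper's case split on the signs of $\epsilon_n$ and $d_n$ with one symmetric estimate that is easier to verify.
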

Note that SROP policy sells asymptotically $K-e^{-K}\frac{K^K}{(K-1)!}$ items with the price of $q\left(\frac{K}{N}\right) = \left(\frac{N}{K}\right)^{\frac{1}{\beta}}$. Therefore, the gap between SROP and optimal will be unbounded:
\begin{align*}
    V(K,N) - S(K,N) = \left(C_K- K^{1-\frac{1}{\beta}} +  e^{-K}\frac{K^{K-\frac{1}{\beta}}}{(K-1)!} \right)N^{\frac{1}{\beta}} 
\end{align*}
Note that while $C_K-K^{1-\frac{1}{\beta}}$ converges to 0 as $K$ increases, the term $e^{-K}\frac{K^{K-\frac{1}{\beta}}}{(K-1)!}$ persists even in the fluid regime. See \Cref{fig:pareto} for a visual illustration.

% \section{Kumaraswamy distribution}
% In this section we look at Kumaraswamy distribution with $a = 1$ for tractibility. Therefore the optimal price will be to set $p^* = \frac{b\Delta+1}{b+1}$ and the probability of making a purchase will be $\left(\frac{b(1-\Delta)}{b+1}\right)^b$. This means:
% \begin{align*}
%     V(k,n) =& V(k,n-1) + \left(\frac{b(1-\Delta)}{b+1}\right)^b\left(V(k-1,n-1)-V(k,n-1) + \frac{b\Delta+1}{b+1}\right)\\
%     =&V(k,n-1) + \left(\frac{b(1+V(k-1,n-1)-V(k,n-1))}{b+1}\right)^{b+1},
% \end{align*}
% therefore, $V(1,n) \approx 1-n^{-\frac{1}{b}}$, therefore the price will be approximately $1-\frac{b}{b+1}n^{-\frac{1}{b}}$. The DROP however set the price so that $(1-p)^b = \frac{k}{n}$ or $p=1-k^{\frac{1}{b}}n^{-\frac{1}{b}}$.

\begin{figure}[h!]
    \centering
    \begin{tikzpicture}[scale=0.8, transform shape]
\begin{axis}[
      width=12cm,
      height=8cm,
      xlabel={$n$},
      ylabel={},
      legend pos=south east,
      xmin=5, xmax=100,
      ymin=0, ymax=3,
    ]
    
\addplot[color=blue, line width=0.5mm] coordinates{
( 5 , 0.0 )
( 6 , 0.06331228385257859 )
( 7 , 0.1144262984386204 )
( 8 , 0.15842302145093434 )
( 9 , 0.19760261919186917 )
( 10 , 0.23326032686741538 )
( 11 , 0.2662118450389386 )
( 12 , 0.2970087834938706 )
( 13 , 0.3260434730189523 )
( 14 , 0.3536058147203782 )
( 15 , 0.37991657505111576 )
( 16 , 0.405148046110698 )
( 17 , 0.42943746246289294 )
( 18 , 0.45289604207527745 )
( 19 , 0.47561526886999594 )
( 20 , 0.49767137383632765 )
( 21 , 0.5191286038123728 )
( 22 , 0.5400416530297569 )
( 23 , 0.5604575032857078 )
( 24 , 0.5804168380331607 )
( 25 , 0.5999551440186011 )
( 26 , 0.6191035801478257 )
( 27 , 0.6378896704551877 )
( 28 , 0.6563378624297016 )
( 29 , 0.6744699810581825 )
( 30 , 0.6923056012267281 )
( 31 , 0.7098623555714383 )
( 32 , 0.7271561908242834 )
( 33 , 0.744201582715398 )
( 34 , 0.7610117172655375 )
( 35 , 0.777598644622369 )
( 36 , 0.7939734103145035 )
( 37 , 0.8101461678134001 )
( 38 , 0.826126275530612 )
( 39 , 0.841922380781611 )
( 40 , 0.8575424927781405 )
( 41 , 0.8729940463386754 )
( 42 , 0.8882839577093478 )
( 43 , 0.9034186736490462 )
( 44 , 0.9184042147392013 )
( 45 , 0.9332462137221178 )
( 46 , 0.9479499495433998 )
( 47 , 0.9625203776688309 )
( 48 , 0.9769621571591678 )
( 49 , 0.9912796749142281 )
( 50 , 1.0054770674377806 )
( 51 , 1.0195582404244767 )
( 52 , 1.0335268864280245 )
( 53 , 1.0473865008342322 )
( 54 , 1.0611403963326431 )
( 55 , 1.0747917160548575 )
( 56 , 1.088343445526025 )
( 57 , 1.1017984235574048 )
( 58 , 1.1151593521919416 )
( 59 , 1.1284288058012493 )
( 60 , 1.141609239420447 )
( 61 , 1.1547029963972548 )
( 62 , 1.1677123154227687 )
( 63 , 1.1806393370037966 )
( 64 , 1.1934861094298093 )
( 65 , 1.206254594281802 )
( 66 , 1.218946671525245 )
( 67 , 1.2315641442246843 )
( 68 , 1.24410874291382 )
( 69 , 1.2565821296511892 )
( 70 , 1.2689859017886285 )
( 71 , 1.2813215954769923 )
( 72 , 1.2935906889310331 )
( 73 , 1.3057946054734835 )
( 74 , 1.317934716376108 )
( 75 , 1.3300123435141913 )
( 76 , 1.3420287618490754 )
( 77 , 1.3539852017522271 )
( 78 , 1.3658828511829704 )
( 79 , 1.377722857731026 )
( 80 , 1.3895063305339157 )
( 81 , 1.401234342078535 )
( 82 , 1.4129079298952583 )
( 83 , 1.4245280981523933 )
( 84 , 1.4360958191579805 )
( 85 , 1.4476120347754744 )
( 86 , 1.4590776577592344 )
( 87 , 1.4704935730153714 )
( 88 , 1.4818606387929272 )
( 89 , 1.4931796878100663 )
( 90 , 1.504451528319553 )
( 91 , 1.5156769451174803 )
( 92 , 1.5268567004989002 )
( 93 , 1.5379915351637194 )
( 94 , 1.5490821690760477 )
( 95 , 1.5601293022797869 )
( 96 , 1.571133615673297 )
( 97 , 1.582095771745486 )
( 98 , 1.5930164152757484 )
( 99 , 1.6038961739998463 )
};
\addplot[color=black, line width=0.5mm] coordinates{
( 5 , 2.6367422340395716 )
( 6 , 1.9275709092797917 )
( 7 , 1.5395381139566933 )
( 8 , 1.2944250756979887 )
( 9 , 1.125283932915833 )
( 10 , 1.001334395099236 )
( 11 , 0.9064517811046375 )
( 12 , 0.8313742837169666 )
( 13 , 0.7704034071035402 )
( 14 , 0.7198405813162196 )
( 15 , 0.6771803071377018 )
( 16 , 0.6406647019537992 )
( 17 , 0.6090237519374853 )
( 18 , 0.5813168328106251 )
( 19 , 0.5568322536522885 )
( 20 , 0.5350214665255848 )
( 21 , 0.51545474436276 )
( 22 , 0.4977905748793052 )
( 23 , 0.48175406037184665 )
( 24 , 0.46712137532977493 )
( 25 , 0.45370838747444964 )
( 26 , 0.4413621959940528 )
( 27 , 0.4299547496537546 )
( 28 , 0.41937797137534893 )
( 29 , 0.40953998975678196 )
( 30 , 0.40036219472291457 )
( 31 , 0.3917769142040939 )
( 32 , 0.38372556402794267 )
( 33 , 0.3761571621197888 )
( 34 , 0.3690271258581994 )
( 35 , 0.36229629147053544 )
( 36 , 0.355930108990119 )
( 37 , 0.3498979771025823 )
( 38 , 0.34417269026704744 )
( 39 , 0.338729976563215 )
( 40 , 0.3335481093212884 )
( 41 , 0.32860757911808136 )
( 42 , 0.3238908154437148 )
( 43 , 0.31938194945856324 )
( 44 , 0.31506661091546206 )
( 45 , 0.31093175362676817 )
( 46 , 0.30696550488968555 )
( 47 , 0.3031570351081534 )
( 48 , 0.2994964445107833 )
( 49 , 0.2959746643976189 )
( 50 , 0.29258337078064045 )
( 51 , 0.28931490863471865 )
( 52 , 0.28616222526373747 )
( 53 , 0.28311881152304963 )
( 54 , 0.28017864983478447 )
( 55 , 0.2773361680941413 )
( 56 , 0.2745861986995699 )
( 57 , 0.27192394205184556 )
( 58 , 0.2693449339614844 )
( 59 , 0.2668450164828293 )
( 60 , 0.26442031176019754 )
( 61 , 0.26206719852778804 )
( 62 , 0.2597822909531722 )
( 63 , 0.2575624195550503 )
( 64 , 0.2554046139607564 )
( 65 , 0.25330608729906245 )
( 66 , 0.25126422204924737 )
( 67 , 0.24927655718970598 )
( 68 , 0.2473407765083202 )
( 69 , 0.24545469795330632 )
( 70 , 0.24361626391757696 )
( 71 , 0.24182353236208343 )
( 72 , 0.2400746686943549 )
( 73 , 0.23836793832807857 )
( 74 , 0.23670169985752665 )
( 75 , 0.23507439878820335 )
( 76 , 0.23348456177117072 )
( 77 , 0.2319307912942996 )
( 78 , 0.23041176078848125 )
( 79 , 0.22892621011126835 )
( 80 , 0.22747294137423424 )
( 81 , 0.226050815083763 )
( 82 , 0.22465874656794682 )
( 83 , 0.22329570266508725 )
( 84 , 0.22196069865152168 )
( 85 , 0.22065279538878713 )
( 86 , 0.21937109667190668 )
( 87 , 0.21811474676240294 )
( 88 , 0.21688292809111687 )
( 89 , 0.2156748591171791 )
( 90 , 0.21448979233093723 )
( 91 , 0.2133270123895521 )
( 92 , 0.21218583437499916 )
( 93 , 0.21106560216518488 )
( 94 , 0.209965686909662 )
( 95 , 0.20888548560208875 )
( 96 , 0.20782441974237287 )
( 97 , 0.20678193408186019 )
( 98 , 0.2057574954456598 )
( 99 , 0.2047505916265635 )
};

\addplot[color=red, line width=0.5mm] coordinates{
( 5 , 0.0 )
( 6 , 0.31697549148984816 )
( 7 , 0.44621199066202166 )
( 8 , 0.5375312695874097 )
( 9 , 0.6113008015550712 )
( 10 , 0.6748322669905811 )
( 11 , 0.731619000519971 )
( 12 , 0.7836056057602692 )
( 13 , 0.8319860416901985 )
( 14 , 0.8775459882818097 )
( 15 , 0.920830103023162 )
( 16 , 0.9622316892731995 )
( 17 , 1.0020442738509026 )
( 18 , 1.0404929673025478 )
( 19 , 1.0777544105514458 )
( 20 , 1.1139699460775283 )
( 21 , 1.1492545951682924 )
( 22 , 1.183703345487812 )
( 23 , 1.2173956602639908 )
( 24 , 1.2503987799982372 )
( 25 , 1.282770184959535 )
( 26 , 1.3145594622023236 )
( 27 , 1.3458097421598154 )
( 28 , 1.376558818889654 )
( 29 , 1.4068400342822436 )
( 30 , 1.4366829837226707 )
( 31 , 1.466114084990572 )
( 32 , 1.4951570411929254 )
( 33 , 1.5238332207163587 )
( 34 , 1.5521619715606843 )
( 35 , 1.5801608833097571 )
( 36 , 1.6078460069650617 )
( 37 , 1.6352320406020322 )
( 38 , 1.6623324871027236 )
( 39 , 1.6891597889157843 )
( 40 , 1.7157254437956162 )
( 41 , 1.7420401046966667 )
( 42 , 1.7681136663932193 )
( 43 , 1.793955340918174 )
( 44 , 1.8195737235362515 )
( 45 , 1.844976850665084 )
( 46 , 1.8701722509156156 )
( 47 , 1.8951669902273167 )
( 48 , 1.9199677119143885 )
( 49 , 1.944580672309712 )
( 50 , 1.9690117725855387 )
( 51 , 1.9932665872435074 )
( 52 , 2.017350389691476 )
( 53 , 2.0412681752657917 )
( 54 , 2.06502468200539 )
( 55 , 2.088624409442165 )
( 56 , 2.112071635635708 )
( 57 , 2.13537043265044 )
( 58 , 2.1585246806466216 )
( 59 , 2.1815380807355034 )
( 60 , 2.2044141667294923 )
( 61 , 2.2271563159018974 )
( 62 , 2.249767758858102 )
( 63 , 2.272251588605748 )
( 64 , 2.2946107689043345 )
( 65 , 2.316848141961689 )
( 66 , 2.3389664355413906 )
( 67 , 2.3609682695336787 )
( 68 , 2.382856162041321 )
( 69 , 2.4046325350218503 )
( 70 , 2.426299719526682 )
( 71 , 2.4478599605720923 )
( 72 , 2.469315421672789 )
( 73 , 2.4906681890677262 )
( 74 , 2.511920275661808 )
( 75 , 2.533073624709285 )
( 76 , 2.5541301132564733 )
( 77 , 2.5750915553657023 )
( 78 , 2.59595970513546 )
( 79 , 2.61673625953377 )
( 80 , 2.637422861057974 )
( 81 , 2.6580211002345138 )
( 82 , 2.678532517969902 )
( 83 , 2.6989586077644354 )
( 84 , 2.719300817798068 )
( 85 , 2.7395605528965383 )
( 86 , 2.7597391763881767 )
( 87 , 2.7798380118571515 )
( 88 , 2.7998583448002705 )
( 89 , 2.8198014241957345 )
( 90 , 2.8396684639859515 )
( 91 , 2.859460644484546 )
( 92 , 2.879179113708563 )
( 93 , 2.8988249886429713 )
( 94 , 2.9183993564409896 )
( 95 , 2.93790327556329 )
( 96 , 2.957337776862577 )
( 97 , 2.976703864612759 )
( 98 , 2.996002517490119 )
( 99 , 3.015234689507256 )
};
\addplot[
    domain=5:99,
    samples=200,
    dotted,
    green,
    very thick
]
{x^(1/1.5-1)};

\end{axis}

\end{tikzpicture}
    \caption{The optimality gap of the DROP (blue), SROP (red), and TA-DROP (black) algorithms for an initial inventory of $K=5$, with willingness-to-pay drawn from a Pareto distribution (shape parameter $\beta = 1.5$). The dotted green curve illustrates $n^{\frac{1}{\beta}-1}$, representing the theoretical asymptotic convergence limit approached by TA-DROP.}
    \label{fig:pareto}
\end{figure}

To conclude this section, it is instructive to ground these mathematical findings in their underlying economic intuition. The heavy-tailed nature of the Pareto distribution implies that the survival function $1-F(p) = p^{-\beta}$ decays at a slow, polynomial rate rather than an exponential one, meaning extreme valuations occur with significantly higher frequency. Economically, this dictates that demand remains remarkably resilient to price increases at the upper extremes. While artificially inflating prices under an exponential distribution would precipitate a precipitous drop in the probability of a sale, a proportional price increase under a Pareto distribution only marginally diminishes demand. It is precisely this structural resilience at the tail that makes deliberate price inflation not just profitable, but strictly necessary to achieve asymptotic optimality.

\section{Conclusion}\label{sec:conclusion}
In this paper, we have demonstrated that optimal revenue management in large markets does not require the deployment of complex, opaque optimization models. By anchoring simple, rule-based pricing heuristics to the structural properties of the willingness-to-pay distribution, we have shown that sellers can achieve near-optimal results while maintaining strict operational interpretability.

Our analysis provides both conceptual and technical contributions. Conceptually, we demonstrated that the tail behavior of customer demand dictates the fundamental strategy for inventory allocation: while simple run-out heuristics like DROP are sufficient for light-tailed (exponential) demand, heavy-tailed (Pareto) regimes necessitate a shift toward "tail-aware" pricing. By intentionally inflating prices to account for the option value of extreme valuations, we introduced the TA-DROP heuristic, which restores asymptotic optimality where standard methods fail. Technically, we established that these heuristics achieve a vanishing absolute optimality gap—decaying at a strict polynomial rate—providing a theoretical guarantee that is stronger than standard ratio-based convergence.

While our framework abstracts away the complexities of multi-night bookings and network effects, it serves as a robust and tractable foundation for practical pricing. We believe these findings offer a compelling case for the "simple-yet-optimal" paradigm in revenue management and open several interesting avenues for future research. On the theoretical front, extending this analysis beyond specific distribution classes to characterize generalized demand—such as identifying the precise hazard rate boundary that necessitates the TA-DROP heuristic—would significantly generalize our insights. Finally, it remains an open and highly relevant question whether the strict polynomial regret bounds achieved by these tools persist in the true fluid scaling regime, where the inventory capacity $K$ is exceptionally large and scales proportionally with the market.

\bibliographystyle{plainnat}
\bibliography{refs}

\begin{thebibliography}{13}
\providecommand{\natexlab}[1]{#1}
\providecommand{\url}[1]{\texttt{#1}}
\expandafter\ifx\csname urlstyle\endcsname\relax
  \providecommand{\doi}[1]{doi: #1}\else
  \providecommand{\doi}{doi: \begingroup \urlstyle{rm}\Url}\fi

\bibitem[Abdallah and Reed(2025)]{abdallah2025dynamic}
Tarek Abdallah and Josh Reed.
\newblock Dynamic pricing in the large market regime.
\newblock \emph{Available at SSRN 5223633}, 2025.

\bibitem[Abdallah and Reed(2026)]{abdallah2026regime}
Tarek Abdallah and Josh Reed.
\newblock Regime-dependent approximations for the single-item dynamic pricing problem.
\newblock \emph{Operations Research}, 74\penalty0 (3):\penalty0 1388--1407, 2026.

\bibitem[Bitran and Caldentey(2003)]{BC}
Gabriel Bitran and Ren{\'e} Caldentey.
\newblock An overview of pricing models for revenue management.
\newblock \emph{Manufacturing \& Service Operations Management}, 5\penalty0 (3):\penalty0 203--229, 2003.

\bibitem[Chen and Farias(2013)]{CF}
Yiwei Chen and Vivek~F Farias.
\newblock Simple policies for dynamic pricing with imperfect forecasts.
\newblock \emph{Operations Research}, 61\penalty0 (3):\penalty0 612--624, 2013.

\bibitem[Gallego and Topaloglu(2019)]{GT}
Guillermo Gallego and Huseyin Topaloglu.
\newblock \emph{Revenue management and pricing analytics}, volume 209.
\newblock Springer, 2019.

\bibitem[Gallego and Van~Ryzin(1994)]{Gvr}
Guillermo Gallego and Garrett Van~Ryzin.
\newblock Optimal dynamic pricing of inventories with stochastic demand over finite horizons.
\newblock \emph{Management science}, 40\penalty0 (8):\penalty0 999--1020, 1994.

\bibitem[Jasin(2014)]{Jasin}
Stefanus Jasin.
\newblock Reoptimization and self-adjusting price control for network revenue management.
\newblock \emph{Operations Research}, 62\penalty0 (5):\penalty0 1168--1178, 2014.

\bibitem[Maglaras and Meissner(2006)]{MM}
Constantinos Maglaras and Joern Meissner.
\newblock Dynamic pricing strategies for multiproduct revenue management problems.
\newblock \emph{Manufacturing \& Service Operations Management}, 8\penalty0 (2):\penalty0 136--148, 2006.

\bibitem[McAfee and te~Velde(2008)]{mcafee2008dynamic}
R~Preston McAfee and Vera te~Velde.
\newblock Dynamic pricing with constant demand elasticity.
\newblock \emph{Production and operations Management}, 17\penalty0 (4):\penalty0 432--438, 2008.

\bibitem[Reiman and Wang(2008)]{reiman2008asymptotically}
Martin~I Reiman and Qiong Wang.
\newblock An asymptotically optimal policy for a quantity-based network revenue management problem.
\newblock \emph{Mathematics of Operations Research}, 33\penalty0 (2):\penalty0 257--282, 2008.

\bibitem[R{\'e}nyi(1953)]{renyi1953theory}
Alfr{\'e}d R{\'e}nyi.
\newblock On the theory of order statistics.
\newblock 1953.

\bibitem[Wang and Wang(2022)]{wang2022constant}
Yining Wang and He~Wang.
\newblock Constant regret resolving heuristics for price-based revenue management.
\newblock \emph{Operations Research}, 70\penalty0 (6):\penalty0 3538--3557, 2022.

\bibitem[Yan(2011)]{Qiqi}
Qiqi Yan.
\newblock Mechanism design via correlation gap.
\newblock In \emph{Proceedings of the twenty-second annual ACM-SIAM symposium on Discrete Algorithms}, pages 710--719. SIAM, 2011.

\end{thebibliography}

\newpage
\begin{appendix}
\setcounter{page}{1}
\renewcommand{\thepage}{A\arabic{page}}
\section{Proofs}\label{sec:proofs}
Before Proving the propositions of the main body, we present a generalization of the Stirling approximation which will be useful in the future arguments.
\begin{restatable}{lemma}{StirlingApprox}
\label{StirlingApprox}For any positive integer $k$ and constants $c$, $M > k+c$, and $\alpha < 1$ we have:
    $$\lim_{n \rightarrow \infty} \left[\sum_{i=M}^n \left(\frac{{i-c \choose k-1}}{{n \choose k}} \; \log (i)\right) - \log(n) +  \frac{1}{k}\right]n^{\alpha} =0$$
\end{restatable}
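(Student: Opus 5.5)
The plan is to read the weights ${i-c \choose k-1}/{n \choose k}$ as an approximate probability distribution on $\{M,\dots,n\}$ and compare the sum with the integral of $\log$ against the density $k x^{k-1}/n^k$ on $[0,n]$. That integral equals $\log n - \frac1k$ exactly:
\begin{align*}
\int_0^n \frac{k x^{k-1}}{n^k}\log x \, dx = \log n - \frac{1}{k}.
\end{align*}
So it suffices to show that replacing the discrete weighted sum by this integral costs only $o(n^{-\alpha})$ for every $\alpha<1$. In fact I expect the error to be $O(\log n / n)$.

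\textbf{Step 1 (constant $c$).} For $c=0$ the hockey-stick identity $\sum_{i=k}^{n}{i-1\choose k-1}={n\choose k}$ makes the weights ${i-1\choose k-1}/{n\choose k}$ an exact probability mass function. It is the law of the largest of $k$ indices drawn uniformly without replacement from $[n]$. I would therefore rewrite ${i-c\choose k-1}$ as ${i-1\choose k-1}\bigl(1+O(1/i)\bigr)$, uniformly for $i\ge M>k+c$, since both are polynomials in $i$ of degree $k-1$ with the same leading coefficient $1/(k-1)!$. The correction contributes
\begin{align*}
\sum_{i\le n} \frac{i^{k-2}}{n^k}\log i = O\!\left(\frac{\log n}{n}\right).
\end{align*}
The terms with $i<M$ are finitely many, each of order $n^{-k}\log M$, so they are negligible as well.

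\textbf{Step 2 (main term).} Let $X$ be the maximum of a uniform $k$-subset of $[n]$. The target is $\EX[\log X] = \log n - \frac1k + O(\log n/n)$. Write $\EX[\log X]=\log n+\EX[\log(X/n)]$. The law of $X/n$ converges to the Beta$(k,1)$ density $kx^{k-1}$, and $\EX_{\mathrm{Beta}}[\log x]=-1/k$. To get a rate, I would compare the pmf ${i-1\choose k-1}/{n\choose k}$ with $\frac{k}{n}(i/n)^{k-1}$. Both are polynomials in $i$, so their difference is at most $C\, i^{k-2}/n^{k}+C\,i^{k-1}/n^{k+1}$. Summing this difference against $|\log(i/n)|$ gives $O(\log n/n)$. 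It then remains to compare the Riemann sum $\frac1n\sum_i k(i/n)^{k-1}\log(i/n)$ with the integral $\int_0^1 kx^{k-1}\log x\,dx$. The function $x^{k-1}\log x$ is monotone on pieces and integrable, with the only singularity at $0$ when $k=1$, so a standard Euler--Maclaurin or monotone-comparison estimate gives an error of $O(\log n/n)$.

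\textbf{Main obstacle.} The hardest step is the Riemann-sum error near $x=0$ in the case $k=1$. There $\log x$ is unbounded, and the sum $\frac1n\sum_{i=M}^n\log(i/n)$ must be handled directly. Stirling's formula gives $\log(n!)-n\log n=-n+O(\log n)$, so this sum equals $-1+O(\log n/n)$, which is exactly the needed rate. For $k\ge2$ the integrand vanishes at $0$, and the comparison is routine.

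Finally, since $O(\log n/n)\cdot n^{\alpha}\to0$ for every $\alpha<1$, the claimed limit follows.
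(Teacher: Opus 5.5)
Your proof is correct, and its core estimate is the same as the paper's: both reduce the weighted sum to $\frac{k}{n^k}\int x^{k-1}\log x\,dx$, which equals $\log n-\frac1k$ up to $O(\log n/n)$. The routes differ in how that comparison is organized.

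The paper stays in the unscaled variable. Since $x\mapsto(x-c)\cdots(x-c-k+2)\log x$ is increasing on $[M-1,n+1]$, it sandwiches the sum between $\int_{M-1}^{n}$ and $\int_{M}^{n+1}$ of that function. It then replaces the falling factorial by $x^{k-1}$ and ${n\choose k}(k-1)!$ by $n^k/k$, and evaluates the antiderivative $\frac{x^k\log x}{k}-\frac{x^k}{k^2}$ directly. Because the lower limit of integration is fixed and only the upper limit grows, the $k=1$ logarithmic singularity you flag as the main obstacle plays no role there. That singularity comes from your rescaling to $[0,1]$.

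What your route buys is explicit error control. You normalize to the exact pmf ${i-1\choose k-1}/{n\choose k}$ of the maximum of a uniform $k$-subset, compare it coefficientwise with $\frac kn(i/n)^{k-1}$, and bound the Riemann-sum error. This makes every replacement visibly $O(\log n/n)$. The paper's write-up skips exactly this bookkeeping when it swaps these quantities inside a limit multiplied by $n^{\alpha}$.

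Two minor slips. First, the reference case with an exact pmf is $c=1$, not $c=0$. Second, the multiplicative form ${i-c\choose k-1}={i-1\choose k-1}\bigl(1+O(1/i)\bigr)$ breaks down at the finitely many $i\ge M$ where ${i-1\choose k-1}=0$, which can happen when $c<-1$. What you should invoke is the additive bound $\bigl|{i-c\choose k-1}-{i-1\choose k-1}\bigr|=O(i^{k-2})$, valid uniformly for $i\ge 1$. That is also what your displayed correction sum actually uses.
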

\begin{remark}
    As sanity check, for $k=1$ this is the same as Stirling approximation:
    $$\frac{\log(n!)}{n} \approx \log(n) - 1$$
\end{remark}
\begin{proof}
    We use integral approximation:
    \begin{align*}
    \frac{1}{{n \choose k}(k-1)!}\int_{M}^{n+1} (x-c)...(x-c-k+2) \log(x) \; dx - \log(n) + \frac{1}{k} &\\
    \geq \sum_{i=M}^n \frac{{i-c \choose k-1}}{{n \choose k}} \; \log (i) - \log(n) + \frac{1}{k} & \geq \\
     \frac{1}{{n \choose k}(k-1)!}\int_{M-1}^{n} (x-c)...(x-c-k+2) \log(x)\; dx - \log(n) + \frac{1}{k} &
    \end{align*}
    We demonstrate that the upper bound time $n^{\alpha}$ converges to $0$, and the argument for the lower bound follows the same steps exactly. We have:
    \begin{align*}
    & \lim_{n \rightarrow \infty} \left[\frac{n^{\alpha}}{{n \choose k}(k-1)!}\int_{M}^{n+1} (x-c)...(x-c-k+2) \log(x) dx - n^{\alpha} \log (n) + \frac{n^{\alpha}}{k}\right]\\
    =& \lim_{n \rightarrow \infty} \left[ \frac{k\int_{M}^{n+1} (x-c)...(x-c-k+2) \log(x)dx}{n^{k-\alpha}}  - n^{\alpha} \log (n) + \frac{n^{\alpha}}{k}\right]
    \\ =&\lim_{n \rightarrow \infty} \left[ \frac{k\int_{M}^{n+1} x^{k-1} \log(x) dx}{n^k} - n^{\alpha} \log (n) + \frac{n^{\alpha}}{k}\right]\\
    =& \lim_{n \rightarrow \infty}  \left(\left(\frac{x^k\log(x)}{n^{k-\alpha}}-\frac{x^k}{n^{k-\alpha}k}\middle)\right|_M^{n+1} - n^{\alpha} \log (n) + \frac{n^{\alpha}}{k}\right) = 0 
    \end{align*}
    Similarly, we can demonstrate the same convergence for the lower bound, and the result follows from the Squeeze Theorem. $\Box$
\end{proof}

\subsection{Proofs of \Cref{sec:model}}

\propSROP*
\begin{proof}
    The probability of selling to each customer is given by $\frac{k}{n}$. Therefore, the probability of selling to $i$ customers can be expressed as:
    \begin{align*}
        \prob(i \text{ Sales}) = \lim_{n \rightarrow \infty} {n \choose i}\left(1-\frac{k}{n}\right)^{n-i} \left(\frac{k}{n}\right)^i = \frac{n^i}{i!}\cdot e^{-k}\cdot \frac{k^i}{n^i} = e^{-k}\frac{k^i}{i!}
    \end{align*}
    Hence, the expected number of sales is:
    \begin{align*}
        \EX [\text{Sales}] &= \sum_{i=0}^{k-1} i \cdot \prob(i \text{ Sales}) + k \cdot \left(1- \sum_{i=0}^{k-1} \prob(i \text{ Sales}) \right)\\
        &= k - \sum_{i=0}^{k-1} (k-i) \prob(i \text{ Sales})\\
        &= k-e^{-k}\left(\sum_{i=0}^{k-1}\frac{k^{i+1}}{i!} - \sum_{i=1}^{k-1}\frac{k^{i}}{(i-1)!}\right) = k-e^{-k}\frac{k^k}{(k-1)!}
    \end{align*}
\end{proof}

\uniformD*
\begin{proof}
    We prove that $\lim_{\customer \rightarrow \infty} D(\inventory,\customer) = \bar{p}\inventory$ using induction on $\inventory$. The base case is straightforward for $\inventory = 0$. Now, assume the statement holds for  $\inventory-1$, and we aim to prove it for $\inventory$. Then for any $\epsilon > 0$, there exists a sufficiently large integer $M_{\epsilon}$ such that for all $\customer \geq M_{\epsilon}$, we have: $$ x_k(n) \coloneqq D(\inventory-1,\customer-1)+ \bar{q}_F\left(\frac{\inventory}{\customer}\right) > k\bar{p}-\epsilon$$
    Notice that at state $(k,n)$, $x_k(n)$ represents the revenue of the algorithm, conditional on a purchase being made. Using this lower bound, \eqref{DICRec} can be rewritten for large $n >> M_{\epsilon}$:
    \begin{align*}
    D(\inventory,n) =& \left(1-\frac{\inventory}{n}\right)D(\inventory,n-1) + \frac{\inventory}{n} x_k(n)\\
    =& \frac{n-\inventory}{n}\left(\frac{n-1-\inventory}{n-1}D(\inventory,n-2) + \frac{\inventory}{n-1} x(n-1)\right)+ \frac{\inventory}{n} x_k(n)\\
    >& \sum_{i=M_{\epsilon}}^n \frac{(n-k)!}{(i-k)!}\cdot \frac{i!}{n!}\cdot \frac{k}{i} \; x(i)\\
    >& \sum_{i=M_{\epsilon}}^n \frac{(n-k)!}{(i-k)!}\cdot \frac{i!}{n!}\cdot \frac{k}{i} \; (k\bar{p}-\epsilon)\\
    =& (k\bar{p}-\epsilon)\sum_{i=M_{\epsilon}}^n \frac{{i-1 \choose k-1}}{{n \choose k}} = (k\bar{p}-\epsilon) \left( 1-\sum_{i=k}^{M_{\epsilon}-1} \frac{{i-1 \choose k-1}}{{n \choose k}}\right) 
\end{align*}
Where the last equation comes from \textit{Hockey Stick Identity}\footnote{Hockey Stick Identity: $\sum_{i=k}^n {i-1 \choose k-1} = {n \choose k}$. This can be proven by induction on $n$ and using Pascal's triangle formula.}. Then by making the number of customers $n$ larger we can get:
\begin{align*}
    \lim_{n \rightarrow \infty} D(\inventory,n) \geq \lim_{n \rightarrow \infty} (k\bar{p}-\epsilon) \left( 1-\sum_{i=k}^{M_{\epsilon}-1} \frac{{i-1 \choose k-1}}{{n \choose k}}\right) = k\bar{p}-\epsilon
\end{align*}
This shows that the limit of the DROP can be made arbitrarily close to $\bar{p}k$, thereby completing the proof. $\Box$
\end{proof}

\subsection{Proofs of \Cref{sec:mainResults}}

\propD*
\begin{proof}
    similar to the proof of the \Cref{propUniform} we proceed by using induction on $k$ and expand the recursion as done in that proof:
    \begin{align*}
    D(\inventory,n) = &
    \left(1-\frac{\inventory}{n} \right)
    D(\inventory,n-1) + \frac{\inventory}{n}\left( D(\inventory-1,n-1) + \bar{q}_F\left(\frac{\inventory}{n}\right)\right)\\
    =& \left(1-\frac{\inventory}{n} \right)
    D(\inventory,n-1) + \frac{\inventory}{n}\left( D(\inventory-1,n-1) + \log \left(\frac{n}{\inventory}\right)\right)\\
    =& \sum_{i=k+1}^n \frac{{i-1 \choose k-1}}{{n \choose k}} \; \left(D(\inventory-1,i-1) + \log \left(\frac{i}{\inventory}\right) \right)
    \end{align*}
    By defining $M_{\epsilon}$ in which for $n > M_{\epsilon}$ we have 
    $$\left| (k-1)\log(n-1) - D(k-1,n-1) - k + 1 - \log ((k-1)!) \right| \leq  \epsilon n^{-\alpha}$$
    it follows that:
    \begin{align*}
    & \sum_{i=M_{\epsilon}+1}^n \frac{{i-1 \choose k-1}}{{n \choose k}} \; \left(k\log (i-1) - k + 1 - \log(k!) - \epsilon (i-1)^{-\alpha} \right)\\
    \leq &  D(\inventory,n) =  \sum_{i=k+1}^n \frac{{i-1 \choose k-1}}{{n \choose k}} \; \left(D(\inventory-1,i-1) + \log \left(\frac{i}{\inventory}\right) \right)\\
    \leq & \sum_{i=M_{\epsilon}+1}^n \frac{{i-1 \choose k-1}}{{n \choose k}} \; \left(k\log (i) - k + 1 - \log(k!) + \epsilon (i-1)^{-\alpha} \right)
    + \sum_{i=k+1}^{M_{\epsilon}} \frac{{i-1 \choose k-1}}{{n \choose k}} \; C
    \end{align*}
    for a constant $C$. Then we only need to prove that both upper and lower bounds minus $k\log(n)- k - \log(k!)$ times $n^{\alpha}$ are converging to $0$.\\
    \textbf{Lower bound.} We utilize \Cref{StirlingApprox} and Hockey Stick Identity to simplify the summations:
    \begin{align*}
    &\lim_{n \rightarrow \infty} \left[k\log(n) - k - \log(k!) - \sum_{i=M_{\epsilon}+1}^n \frac{{i-1 \choose k-1}}{{n \choose k}} \; \left(k\log (i-1) - k + 1 - \log(k!) - \epsilon (i-1)^{-\alpha} \right)\right]n^{\alpha} \\
    =& k\lim_{n \rightarrow \infty} \left[\log(n) - \sum_{i=M_{\epsilon}+1}^n \frac{{i-1 \choose k-1}}{{n \choose k}} \; \log (i-1) - \frac{1}{k}\right]n^{\alpha} + \epsilon
    = \epsilon 
    \end{align*}
    \textbf{Upper bound.} Notice that the constant term converges to zero as $n$ approaches to infinity and  We can apply the same approach used for the lower bound to demonstrate the convergence of the upper bound.
    
    Putting all the pieces together we showed that for all $\epsilon>0$ there is a sufficiently large $M_\epsilon$ such that for all $n>M_\epsilon$:
    \begin{align*}
        -\epsilon \leq [k\log(n) -k-\log(k!) - D(k,n)]n^{\alpha} \leq \epsilon \;\;\;\Box
    \end{align*}
\end{proof}

\propV*
\begin{proof} First notice that with an exponential willingness to pay, the recursion will be of the following form:
    \begin{align*}
        V(k,n) = \max_r e^{-r}\left(r + V(k-1,n-1)\right) + \left(1-e^{-r}\right)V(k,n-1)
    \end{align*}
    Solving for the optimal $r$ we get:
    \begin{align}\label{rStar}
        -e^{-r^*}\left(r^* + V(k-1,n-1)\right) + e^{-r^*}V(k,n-1) + e^{-r^*} =0 \nonumber\\
        \iff  r^* = V(k,n-1) + 1 - V(k-1,n-1)
    \end{align}
    Hence, 
    \begin{align*}
        V(k,n+1) = V(k,n) + e^{-V(k,n) - 1 + V(k-1,n)}
    \end{align*}
    Let $\Tilde{V}(\inventory,\customer) \coloneqq  V(\inventory,\customer) - \left(\inventory\log(\customer)-\inventory - \log(\inventory !)\right)$. Then:
    \begin{align*}
        \Tilde{V}(\inventory,\customer+1)=  \Tilde{V}(\inventory,\customer) + \frac{k}{n} \left(e^{\Tilde{V}(\inventory-1,\customer)-\Tilde{V}(\inventory,\customer)} - \customer \log \left(\frac{\customer+1}{\customer}\right)\right)
    \end{align*}
    We just need to show $\lim_{\customer \rightarrow \infty} n^{\alpha}\Tilde{V}(\inventory,\customer) = 0$; we do this by induction on $\inventory$. For $\inventory = 0$ it is trivial as $\tilde{V}(\inventory,\customer) = 0$. Assume the statement is correct for $k-1$. Define
    $$W(k,n) = \tilde{V}(k,n) - \tilde{V}(k-1,n)$$
    Using induction, if we show $\lim_{\customer \rightarrow \infty} n^{\alpha}W(\inventory,\customer) = 0$ converges to zero we are done. The recursion becomes:
    \begin{align*}
        W(k,n+1)= W(k,n) + \frac{1}{n}\left( ke^{-W(k,n)} - (k-1) e^{-W(k-1,n)} - n\log \left(1+\frac{1}{n}\right)\right)
    \end{align*}
Notice the initial values for $W$ are $W(1,1) = 1+e^{-1}$ and $W(k,1) = 1+\log(k)$ for $k > 1$. The rest of the proof has several steps that we break them into lemmas.
    \begin{lemma}\label{nlogn}
        For $n \geq 1$ we have:
        \begin{align*}
            1 \geq n\log\left(1+\frac{1}{n}\right) \geq 1- \frac{1}{2n}
        \end{align*}
    \end{lemma}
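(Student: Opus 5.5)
The plan is to treat the two inequalities separately, starting from the Taylor expansion of $\log(1+x)$ with $x = 1/n \in (0,1]$. Set
$$g(x) = x - \log(1+x), \qquad h(x) = \log(1+x) - x + \frac{x^2}{2}.$$
The first inequality says $g(x)\ge 0$ and the second says $h(x)\ge 0$. I will show both by monotonicity.

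\textbf{Upper bound.} Both functions vanish at $x=0$, so it suffices to check their derivatives. For $g$,
$$g'(x) = 1 - \frac{1}{1+x} = \frac{x}{1+x} \ge 0 .$$
Hence $\log(1+x) \le x$ for all $x\ge 0$. Substituting $x=1/n$ and multiplying by $n>0$ gives $n\log(1+\frac{1}{n}) \le 1$.

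\textbf{Lower bound.} For $h$,
$$h'(x) = \frac{1}{1+x} - 1 + x = \frac{x^2}{1+x} \ge 0 .$$
Hence $\log(1+x) \ge x - \frac{x^2}{2}$ for $x \ge 0$. With $x = 1/n$ this reads
$$\log\left(1+\frac{1}{n}\right) \ge \frac{1}{n} - \frac{1}{2n^2},$$
and multiplying by $n$ yields $n\log(1+\frac{1}{n}) \ge 1 - \frac{1}{2n}$.

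There is no real obstacle here. The only point requiring care is that each derivative has a definite sign on the whole range $x \in [0,1]$, which holds because both numerators are squares or nonnegative and $1+x>0$. An alternative would be to use the alternating series $\log(1+x)=x-\frac{x^2}{2}+\frac{x^3}{3}-\cdots$: since its terms decrease in absolute value for $x\le 1$, the partial sums bracket the value. I prefer the derivative argument because it avoids the convergence discussion at $x=1$, i.e.\ $n=1$. At $n=1$ itself, the claim $1\ge \log 2 \ge \frac12$ holds directly, which gives a sanity check.
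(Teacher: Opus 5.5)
Your proof is correct, but it takes a different route from the paper.

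The paper works with the Taylor series $\log(1+\frac1n)=\frac1n-\frac{1}{2n^2}+\frac{1}{3n^3}-\cdots$. For each inequality it separates the positive and negative tail terms and compares the two subseries using AM-GM on neighbouring terms. For the upper bound it needs $\sum_{i\ge1}\frac{1}{2i\,n^{2i}}\ge\sum_{i\ge1}\frac{1}{(2i+1)\,n^{2i+1}}$. This follows from
\[
\tfrac12\bigl(\tfrac{1}{2i\,n^{2i}}+\tfrac{1}{(2i+2)\,n^{2i+2}}\bigr)\ge\tfrac{1}{\sqrt{4i(i+1)}\,n^{2i+1}}\ge\tfrac{1}{(2i+1)\,n^{2i+1}}.
\]
The lower bound uses the same trick with $\sqrt{(2i+1)(2i+3)}\le 2i+2$.

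Your monotonicity argument with $g(x)=x-\log(1+x)$ and $h(x)=\log(1+x)-x+\frac{x^2}{2}$ needs no series manipulation and holds for every real $x\ge0$. It also avoids a weak point in the paper's version. At $n=1$ the even- and odd-indexed subseries (e.g.\ $\sum\frac{1}{2i}$ and $\sum\frac{1}{2i+1}$) both diverge, so the comparison there proves nothing. That case has to be checked directly ($\frac12\le\log2\le1$), which you do as your sanity check.

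The paper's route gains nothing over yours. The alternating-series bracketing you mention as an alternative is also a more direct series argument than the AM-GM pairing.
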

    \begin{proof} 
    Notice for $n \geq 1$ the Taylor expansion of $\log\left(1+\frac{1}{n}\right)$ can be written as
    \begin{align*}
        \log\left(1+\frac{1}{n}\right) = \frac{1}{n} -\frac{1}{2n^2} + \frac{1}{3n^3} - \dots
    \end{align*}
    In order to show $\log\left(1+\frac{1}{n}\right) \leq \frac{1}{n}$ we use AM-GM inequality to show:
    \begin{align*}
        \frac{1}{2n^2} + \frac{1}{4n^4} + \frac{1}{6n^6} + \dots \geq& \frac{1}{\sqrt{8}n^3} + \frac{1}{\sqrt{24}n^5} + \frac{1}{\sqrt{48}n^7} + \dots\\
        \geq& \frac{1}{3n^3} + \frac{1}{5n^5} + \frac{1}{7n^7} + \dots
    \end{align*}
    To show $\log\left(1+\frac{1}{n}\right) \geq \frac{1}{n} - \frac{1}{2n^2}$ we do similarly to show:
    \begin{align*}
        \frac{1}{3n^3} + \frac{1}{5n^5} + \frac{1}{7n^7} + \dots \geq& \frac{1}{\sqrt{15}n^4} + \frac{1}{\sqrt{35}n^6} + \frac{1}{\sqrt{63}n^8} + \dots\\
        \geq& \frac{1}{4n^4} + \frac{1}{6n^6} + \frac{1}{8n^8} + \dots \qed
    \end{align*}
    \end{proof}
\begin{lemma}
    $W(k,n)$ is always positive.
\end{lemma}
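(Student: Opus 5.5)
The plan is a double induction: outer on $k$, inner on $n$, driven by the one-step recursion
\begin{align*}
W(k,n+1)= W(k,n) + \frac{1}{n}\left( ke^{-W(k,n)} - (k-1) e^{-W(k-1,n)} - n\log \left(1+\frac{1}{n}\right)\right).
\end{align*}
Fix $k\ge 1$ and assume $W(k-1,m)>0$ for all $m$. For $k=1$ the middle term is absent, since it carries the factor $k-1=0$. Write $y=W(k-1,n)>0$ and $c=n\log(1+1/n)$; by \Cref{nlogn}, $c\le 1$. The right-hand side is then bounded below by
\begin{align*}
\Phi(x) \coloneqq x + \frac{1}{n}\left(k e^{-x} - (k-1)e^{-y} - 1\right), \qquad x=W(k,n).
\end{align*}
The key observation is that $\Phi'(x)=1-\frac{k}{n}e^{-x}$. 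Hence for $n\ge k$ the map $\Phi$ is increasing on $x\ge 0$.

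The main inductive step, for $n\ge k$, runs as follows. If $W(k,n)>0$, monotonicity gives
\begin{align*}
W(k,n+1)\ge \Phi(W(k,n)) > \Phi(0) = \frac{1}{n}\left(k-(k-1)e^{-y}-1\right) = \frac{k-1}{n}\left(1-e^{-y}\right)\ge 0 .
\end{align*}
This propagates positivity from $n$ to $n+1$. For $k>1$ the strictness uses $y>0$; for $k=1$ it comes from $\Phi(W(1,n))>\Phi(0)=0$. So it only remains to establish positivity for $n\le k$ by hand.

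For the small-$n$ regime, note that when $n\le k-1$ at most $n$ units can be sold, so $V(k,n)=V(k-1,n)$. Therefore
\[
W(k,n)=-(\log n-1-\log k)=1+\log(k/n)>1,
\]
which is consistent with the initial values $W(k,1)=1+\log k$ quoted above. For $k=1$, the given value $W(1,1)=1+e^{-1}>0$ settles the base. The remaining case is the transition $n=k-1\to n=k$ for $k\ge 2$, where $\Phi$ need not be increasing on all of $[0,\infty)$. Here I would plug in the explicit value $x=W(k,k-1)=1+\log\frac{k}{k-1}$, for which $ke^{-x}=(k-1)/e$. This gives
\begin{align*}
W(k,k) = 1+\log\frac{k}{k-1} + \frac{1}{e} - e^{-W(k-1,k-1)} - \frac{(k-1)\log\frac{k}{k-1}}{k-1},
\end{align*}
which simplifies to $1+\frac1e-e^{-W(k-1,k-1)}>\frac1e>0$. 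Alternatively, one can just check $W(k,k)=V(k,k)-V(k-1,k)-\log\frac{k}{k}+1$ directly.

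I expect this boundary step $n=k-1\to k$ to be the main obstacle: it is the only place where the monotonicity trick fails, and one must use the exact values rather than the generic bound. Everything else reduces to the monotonicity of $\Phi$ together with \Cref{nlogn}.
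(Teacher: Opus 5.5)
Your proof is correct and follows the paper's route. Like the paper, you bound $n\log(1+1/n)\le 1$ and the $(k-1)$-term by the induction hypothesis to show positivity propagates for $n\ge k$, and you use the closed form $W(k,n)=1+\log(k/n)$ for $n<k$. Two of your refinements also close small holes in the paper's version. Your monotonicity argument for $\Phi$ replaces the paper's tangent bound $e^{-x}\ge 1-x$, which only gives $W(k,n+1)\ge\frac{n-k}{n}W(k,n)$ and hence $W(k,k+1)\ge 0$ rather than $>0$. Your explicit computation $W(k,k)=1+\frac{1}{e}-e^{-W(k-1,k-1)}$ handles the case $n=k$, which the paper skips and which its linear bound cannot reach from $n=k-1$ because the coefficient there is negative.
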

\begin{proof}
Using the \cref{nlogn} and also by strong induction we can prove $W(k,n)$ is always positive. Note that
\begin{align*}
    W(k,n+1) \geq W(k,n) + \frac{1}{n}\left( k\left(1-W(k,n)\right) - (k-1)-1\right) = \frac{n-k}{n}W(k,n).
\end{align*}
Therefore, we are only left to show it is positive for $k > n$. In the following by induction we show for $k>n$ we have:
\begin{align*}
    W(k,n) = 1+\log(k)-\log(n)
\end{align*}
This is easy to see by using induction on $n$. For $n=1$ it follows from the definition. Now assume $k> n+1$, then $k > n$ and:
\begin{align*}
    W(k,n+1)=& W(k,n) + \frac{1}{n}\left( ke^{-W(k,n)} - (k-1) e^{-W(k-1,n)} - n\log \left(1+\frac{1}{n}\right)\right)\\
    =& 1+\log(k)-\log(n) + \frac{1}{n}\left(- n\log \left(1+\frac{1}{n}\right)\right) = 1+\log(k)-\log(n+1)
\end{align*}
Therefore $W(k,n)$ is always positive. $\qed$
\end{proof} 
\begin{lemma}
    $W(k,n)$ converges to zero.
\end{lemma}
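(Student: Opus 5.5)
The plan is to prove $W(k,n)\to 0$ by induction on $k$, treating the recursion for $W$ as a discretization of the ODE $\frac{dW}{d\log n}=k(e^{-W}-1)$. That ODE has $0$ as its globally attracting equilibrium on $W\ge 0$. For $k=1$ the term $(k-1)e^{-W(k-1,n)}$ vanishes. For $k>1$ the induction hypothesis of the surrounding proof gives $\tilde V(k-1,n)\to 0$ and $\tilde V(k-2,n)\to 0$, hence $W(k-1,n)\to 0$. Subtracting and adding $k$, I rewrite the recursion as
\begin{align*}
W(k,n+1)=W(k,n)+\frac{1}{n}\Bigl(k\bigl(e^{-W(k,n)}-1\bigr)+\eta_n\Bigr),\qquad
\eta_n=(k-1)\bigl(1-e^{-W(k-1,n)}\bigr)+\Bigl(1-n\log\bigl(1+\tfrac1n\bigr)\Bigr).
\end{align*}
By the induction hypothesis and \Cref{nlogn} (which gives $0\le 1-n\log(1+\frac1n)\le \frac{1}{2n}$), $\eta_n\to 0$.

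\textbf{Step 1 (entering a small neighbourhood).} Fix $\delta>0$ and choose $\epsilon>0$ with $\epsilon<\tfrac12 k(1-e^{-\delta})$. Take $N$ such that $|\eta_n|\le\epsilon$ for all $n\ge N$.
\begin{itemize}
\item Whenever $W(k,n)\ge\delta$ and $n\ge N$, the increment is at most $-\frac{k(1-e^{-\delta})-\epsilon}{n}\le -\frac{\epsilon}{n}$.
\item Since $\sum 1/n=\infty$, $W(k,n)$ cannot stay above $\delta$ forever; it is finite at time $N$ and decreases by a non-summable amount while above $\delta$.
\item So there is some $n_0\ge N$ with $W(k,n_0)<\delta$.
\end{itemize}

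\textbf{Step 2 (trapping).} I show that after $n_0$ the sequence stays below $2\delta$ (taking $N$ large enough that $\epsilon/N<\delta$).
\begin{itemize}
\item Because $W(k,n)>0$ by the previous lemma, $e^{-W}-1\le 0$, so any single increment is at most $\epsilon/n<\delta$. A jump from below $\delta$ therefore lands below $2\delta$.
\item While $W\in[\delta,2\delta)$ the increment is negative, so $W$ cannot climb further.
\item Hence $\limsup_n W(k,n)\le 2\delta$. Combined with $W>0$ and $\delta$ arbitrary, $W(k,n)\to 0$, which closes the induction.
\end{itemize}

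\textbf{Step 3 (the rate $n^{\alpha}$ needed for \Cref{vConverging}).} Once $W\to 0$, linearize $e^{-W}-1=-W+O(W^2)$ to get
\begin{align*}
W(k,n+1)=\Bigl(1-\frac{k}{n}+O\bigl(\tfrac{W(k,n)}{n}\bigr)\Bigr)W(k,n)+\frac{\eta_n}{n}.
\end{align*}
By the induction hypothesis (which carries the rate), $\eta_n=o(n^{-\alpha})$. Unrolling this recursion with products $\prod(1-k'/i)\approx (m/n)^{k'}$, for any $k'<k$ close to $k$, should give $W(k,n)=o(n^{-\alpha})$, exactly as the weighted sums were handled in the proof of \Cref{DConverging}.

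\textbf{Main obstacle.} I expect the hardest part to be this rate bookkeeping. The error $\eta_n$ inherits only an $o(n^{-\alpha})$ bound from level $k-1$, and the contraction factor $1-k/n$ must beat the accumulated forcing $\sum_i \eta_i/i$ weighted by $(i/n)^{k'}$. This works because $k'\ge 1>\alpha$ makes the weighted sum $o(n^{-\alpha})$. The qualitative convergence in Steps 1 and 2 is the easy part.
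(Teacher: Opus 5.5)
Your Steps 1 and 2 are correct and follow essentially the paper's argument: negative drift above a threshold plus $\sum 1/n=\infty$ forces $W(k,n)$ below it, and because $W\ge 0$ and $W(k-1,n)\to 0$ make upward steps only $O(\epsilon/n)$, $W(k,n)$ cannot climb back. Your $[\delta,2\delta)$ trapping band is a cosmetic variant of the paper's observation that an upward crossing of $-\log(1-\epsilon)$ must start below $-\log(1-\epsilon/2)$. Step 3 is not needed for this statement; it is the content of the paper's next lemma, which linearizes and unrolls the recursion the same way you sketch.
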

\begin{proof}
    Our proof uses a contradiction approach together with induction on $k$. Assume there is a positive number $\epsilon$ where $W(k,n) > -\log(1-\epsilon)$ infinitely many times. Using the induction and \Cref{nlogn} we know there exists a large $N_{\epsilon}$ such that for all $n > N_{\epsilon}$, $W(k-1,n) < -\log\left(1-\frac{\epsilon}{2}\right)$ and $n\log\left(1+\frac{1}{n}\right)> 1-\frac{\epsilon}{2}$.\\
    Now suppose for all $n$ we have $W(k,n) > -\log(1-\epsilon)$. Therefore we can write:
    \begin{align*}
        W(k,n+1)=& W(k,n) + \frac{1}{n}\left( ke^{-W(k,n)} - (k-1) e^{-W(k-1,n)} - n\log \left(1+\frac{1}{n}\right)\right)\\
        \leq& W(k,n) + \frac{1}{n}\left( k(1-\epsilon) - (k-1) (1-\frac{\epsilon}{2}) - (1-\frac{\epsilon}{2})\right) = W(k,n) - \frac{k\epsilon}{2n}
    \end{align*}
    The fact that sum of reciprocals are diverging gives a contradiction. Therefore, the only remaining possible case is that $W(k,n)$ oscillates and passes $-\log(1-\epsilon)$ infinitely many times. Take one of this instances and assume for $n>N$, $W(k,n+1) \geq -\log(1-\epsilon) \geq W(k,n)$. The fact that $W(k,n+1) \geq W(k,n)$ means that $W(k,n) \leq -\log \left(1-\frac{\epsilon}{2}\right)$. Therefore:
    \begin{align*}
        W(k,n+1)=& W(k,n) + \frac{1}{n}\left( ke^{-W(k,n)} - (k-1) e^{-W(k-1,n)} - n\log \left(1+\frac{1}{n}\right)\right)\\
        \leq& -\log \left(1-\frac{\epsilon}{2}\right) + \frac{1}{n}\left( k - (k-1) (1-\frac{\epsilon}{2}) - (1-\frac{\epsilon}{2})\right) = -\log \left(1-\frac{\epsilon}{2}\right) + \frac{k\epsilon}{2n} 
    \end{align*}
    But for large enough $n$ the right hand side is less than $-\log(1-\epsilon)$ which completes our proof.$\qed$
\end{proof}
\begin{lemma} For all $\alpha < 1$, 
    $\lim_{n\rightarrow \infty} W(k,n) n^{\alpha} = 0$
\end{lemma}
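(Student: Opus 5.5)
We prove the claim by induction on $k$, upgrading the qualitative convergence of the previous lemma to a polynomial rate. The base case is $W(0,n)\equiv 0$ (as $\tilde V(0,\cdot)=0$). For $k=1$ the term $(k-1)e^{-W(k-1,n)}$ vanishes, so the same argument applies with no induction input. Assume $\lim_n W(k-1,n)n^{\alpha'}=0$ for every $\alpha'<1$.

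\textbf{Step 1: linearize the recursion.} Write
\begin{align*}
W(k,n+1)=W(k,n)+\frac{1}{n}\Big(ke^{-W(k,n)}-(k-1)e^{-W(k-1,n)}-n\log\big(1+\tfrac1n\big)\Big).
\end{align*}
By the previous two lemmas, $W(k,n)\ge 0$ and $W(k,n)\to 0$. Hence for large $n$ we have $ke^{-W(k,n)}\le k-kW(k,n)+\tfrac{k}{2}W(k,n)^2$. The induction hypothesis gives $(k-1)e^{-W(k-1,n)}\ge (k-1)-(k-1)W(k-1,n)$. \Cref{nlogn} gives $n\log(1+\frac1n)\ge 1-\frac{1}{2n}$. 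Combining these,
\begin{align*}
W(k,n+1)\le \Big(1-\frac{k-\delta_n}{n}\Big)W(k,n)+\frac{1}{n}\Big((k-1)W(k-1,n)+\frac{1}{2n}\Big),
\end{align*}
where $\delta_n=\tfrac{k}{2}W(k,n)\to 0$. Since $W(k,n)\ge 0$, only this upper bound is needed.

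\textbf{Step 2: bound the forcing term.} Fix $\alpha<1$ and choose $\alpha'$ with $\alpha<\alpha'<1$. By the induction hypothesis, $(k-1)W(k-1,n)\le \epsilon n^{-\alpha'}$ for all large $n$. Trivially, $\frac{1}{2n}\le n^{-\alpha'}$. So the forcing term is at most $(1+\epsilon)n^{-1-\alpha'}$. Pick $N$ such that for $n\ge N$ we have $\delta_n<\tfrac12$ and both bounds above hold. The contraction factor then satisfies $1-\frac{k-\delta_n}{n}\le 1-\frac{1/2}{n}$. For $k\ge 1$ one can even use $1-\frac{k-1/2}{n}$.

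\textbf{Step 3: compare with a power-law barrier.} Set $u_n=Bn^{-\gamma}$ with $\gamma=\min(\alpha',\,k-\tfrac12)-\eta$ for a small $\eta>0$. For such $u_n$,
\begin{align*}
u_{n+1}\ge u_n\big(1-\tfrac{\gamma}{n}-O(n^{-2})\big).
\end{align*}
We then check that
\begin{align*}
\big(1-\tfrac{k-1/2}{n}\big)u_n+(1+\epsilon)n^{-1-\alpha'}\le u_{n+1}.
\end{align*}
This reduces to $(k-\tfrac12-\gamma)Bn^{-1-\gamma}\ge (1+\epsilon)n^{-1-\alpha'}+O(Bn^{-2-\gamma})$. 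That holds for large $n$, since $k-\tfrac12>\gamma$ and $\alpha'>\gamma$. Choose $B$ large enough that $W(k,N)\le u_N$. An induction on $n$ then gives $0\le W(k,n)\le Bn^{-\gamma}$ for all $n\ge N$.

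\textbf{The main obstacle.} This barrier gives only the rate $\gamma<\min(\alpha',k-\tfrac12)$, which is below $1$ when $k=1$. To close the gap we bootstrap. With the bound $W(k,n)=O(n^{-\gamma})$ in hand, we have $\delta_n=O(n^{-\gamma})$, so the true contraction factor is $1-\frac{k}{n}+O(n^{-1-\gamma})$. Redoing Step 3 with factor $1-\frac{k-\delta'}{n}$, for an arbitrary small $\delta'>0$, allows any $\gamma<\min(\alpha',k)$. Since $k\ge 1>\alpha$, we may take $\gamma>\alpha$. Then $W(k,n)n^{\alpha}\le Bn^{\alpha-\gamma}\to 0$.

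The delicate points are two. First, the exponent of the forcing term must strictly exceed $\alpha$; this is why we take $\alpha'>\alpha$ in the induction. Second, the barrier comparison must be done cleanly, with all $O(n^{-2})$ corrections absorbed for $n\ge N$.
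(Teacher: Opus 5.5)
Your argument is correct. It shares the paper's skeleton: induction on $k$; the inputs $W\ge 0$, $W\to 0$ and \Cref{nlogn}; and the same linearization. Your contraction factor $1-\frac{k-\delta_n}{n}$ with $\delta_n=\frac{k}{2}W(k,n)\to 0$ is exactly the paper's $1-\frac{k\beta}{n}$, where the paper's auxiliary constant $\beta<1$ comes from $e^{-W}\le 1-\beta W$ once $W$ is small.

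The difference lies in how the resulting linear inequality is solved. The paper unrolls it into three sums weighted by $\prod_{i=\ell+1}^{n}\frac{i-k\beta}{i}$. It then replaces these products by $(\ell/n)^{k\beta}$ using \Cref{lem:ndividedM} together with the uniformity asserted in \Cref{remark:corr}, and estimates each sum. In doing so it applies the induction hypothesis at the same exponent $\alpha$ and then lets $\epsilon\to 0$.

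You instead verify that $Bn^{-\gamma}$ is a supersolution. This needs only $(1+\frac1n)^{-\gamma}\ge 1-\frac{\gamma}{n}$ and an induction on $n$. Invoking the hypothesis at some $\alpha'>\alpha$ gives you a strict margin, so no $\epsilon$-limit is needed. Your comparison route is shorter and avoids both the product asymptotics and the uniformity in $\ell$ they require, which the paper only sketches. The paper's unrolling, in exchange, gives an explicit term-by-term representation of the bound.

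One simplification: your bootstrap is unnecessary. The previous lemma already gives $\delta_n\to 0$, so you can fix any $\delta'>0$ from the outset and choose $N$ with $\delta_n<\delta'$ for $n\ge N$. This is precisely the role of letting $\beta\uparrow 1$ in the paper. The apparent obstacle for $k=1$ arose only because you fixed the threshold at $\frac12$.
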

    Since $W(k,n)$ converges to zero, for each $\beta < 1$ (we will later set $\beta$ to be greater than $\alpha$) there is a large number $N_{\beta}$ where for all $n>N_{\beta}$ we have $e^{-W(k,n)} \leq 1-\beta W(k,n)$. For such $n$ we have:
    \begin{align*}
        W(k,n+1) &\leq W(k,n) + \frac{1}{n}\left( k\left(1-\beta W(k,n)\right) - (k-1)\left(1-W(k-1,n)\right) -1+\frac{1}{2n}\right) \\
        &= \frac{n-k\beta}{n}W(k,n) + \frac{k-1}{n}W(k-1,n)+\frac{1}{2n^2}\\
        &= W(k,N_{\beta})\left(\prod_{i = N_{\beta}}^n \frac{i-k\beta}{i}\right) + (k-1) \sum_{\ell=N_{\beta}}^n \left( \frac{W(k-1,\ell)}{\ell} \prod_{i = \ell+1}^n \frac{i-k\beta}{i}\right) + \sum_{\ell=N_{\beta}}^n \frac{1}{2\ell^2} \prod_{i = \ell+1}^n \frac{i-k\beta}{i}.
        \\
        &\overset{(a)}{\leq} (C+ \epsilon_n)\left(W(k,N_{\beta}) \left(\frac{N_{\beta}}{n}\right)^{k\beta} + (k-1) \sum_{\ell=N_{\beta}}^n \left( \frac{W(k-1,\ell)}{\ell} 
        \left(\frac{\ell+1}{n}\right)^{k \beta}
        \right) + \sum_{\ell=N_{\beta}}^n \frac{1}{2\ell^2} \left(\frac{\ell+1}{n}\right)^{k \beta}\right).
    \end{align*}
Which (a) comes from the \Cref{lem:ndividedM} and also \Cref{remark:corr}.
Therefore taking the limit and multiplying both sides by $\alpha$ and setting $\beta > \alpha$ we get:
\begin{align*}
        \lim_{n\rightarrow \infty} \frac{n^{\alpha} W(k,n+1)}{C+ \epsilon_n}
        \leq \lim_{n\rightarrow \infty} W(k,N_{\beta}) \frac{N_{\beta}^{k\beta}}{n^{k\beta-\alpha}} + (k-1) \sum_{\ell=N_{\beta}}^n \left( \frac{\ell^{\alpha}W(k-1,\ell)}{\ell^{1+\alpha}} 
        \frac{(\ell+1)^{k \beta}}{n^{k \beta-\alpha}}\right) + \sum_{\ell=N_{\beta}}^n \frac{1}{2\ell^2} \frac{(\ell+1)^{k\beta}}{n^{k \beta-\alpha}}.
    \end{align*}
Notice that the first term goes to zero as $k\beta \geq \beta > \alpha$. Using induction we can simplify the second term as we know $\lim_{n\rightarrow\infty} n^{\alpha} W(k-1,n) = 0$. By induction we know for all $n>\hat{N}_{\beta}$, $n^{\alpha} W(k-1,n) < \epsilon$. The third term is the sum of $\frac{\ell^{k\beta-2}}{n^{k\beta-\alpha}}$ which will be of the order $\frac{n^{k\beta - 1}}{n^{k\beta - \alpha}} \rightarrow 0$. Therefore, the only remaining term is the second term:
\begin{align*}
        \lim_{n\rightarrow \infty} \frac{n^{\alpha} W(k,n+1)}{C+ \epsilon_n}
        \leq \lim_{n\rightarrow \infty}  (k-1) \epsilon \sum_{\ell=\hat{N}_{\beta}}^n \left( \frac{1}{\ell^{1+\alpha}} 
        \frac{(\ell+1)^{k \beta}}{n^{k \beta-\alpha}}\right) = \epsilon(k-1) \Theta.
    \end{align*}
where $\Theta$ is a constant because sum of $l^{k\beta-1-\alpha}$ will be of the order $n^{k\beta-\alpha}$. Therefore, we can decrease the right hand side arbitrary by reducing $\epsilon$ and hence this shows that $\lim_{n\rightarrow \infty} n^{\alpha} W(k,n+1)=0 \qed$
\begin{restatable}{lemma}{GammaFunction}\label{lem:ndividedM}
    for any $M > \alpha$:
    \footnote{
    In fact $C = 1$, but our analysis does not rely on this exact value.}
    \begin{align*}
        \lim_{N \rightarrow \infty} \left(\frac{N}{M}\right)^\alpha \prod_{n=M}^N \frac{n-\alpha}{n} = C
    \end{align*}
\end{restatable}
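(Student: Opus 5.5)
The plan is to take logarithms and turn the product into a series whose convergence is controlled by the harmonic numbers. Since $M>\alpha$, every factor $\frac{n-\alpha}{n}=1-\frac{\alpha}{n}$ with $n\ge M$ is strictly positive, so the logarithm is well defined. Write
\begin{align*}
L_N \coloneqq \alpha\log N - \alpha\log M + \sum_{n=M}^N \log\left(1-\frac{\alpha}{n}\right).
\end{align*}
It suffices to show that $L_N$ converges to a finite real number $L$. Then the quantity in the lemma tends to $C=e^{L}\in(0,\infty)$. The footnote's claim $C=1$ is not needed, since only finiteness and positivity of $C$ are used in step (a) above.

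The key step is the expansion $\log(1-\frac{\alpha}{n}) = -\frac{\alpha}{n} + r_n$ with $|r_n|\le c_{\alpha,M}\, n^{-2}$ for all $n\ge M$. Here $c_{\alpha,M}$ is a constant, which is available because $|\alpha/n|\le |\alpha|/M$ stays bounded away from $1$ when $\alpha>0$. When $\alpha\le 0$ the bound is even easier, since $\log(1+x)-x$ is $O(x^2)$ for $x\ge 0$. This expansion splits $L_N$ into two pieces:
\begin{align*}
L_N = \alpha\left(\log N - H_N\right) + \alpha H_{M-1} - \alpha\log M + \sum_{n=M}^N r_n .
\end{align*}
The first piece converges to $-\alpha\gamma$ by the Euler–Mascheroni limit already quoted in the paper, and the middle terms are constants. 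The series $\sum r_n$ converges absolutely because $\sum n^{-2}<\infty$. Hence $L_N$ converges, which proves the lemma.

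As a cross-check, and as an alternative route, the product telescopes into Gamma functions:
\begin{align*}
\prod_{n=M}^N \frac{n-\alpha}{n}=\frac{\Gamma(N+1-\alpha)\,\Gamma(M)}{\Gamma(M-\alpha)\,\Gamma(N+1)}.
\end{align*}
Combining this with the standard asymptotic $\Gamma(N+1-\alpha)/\Gamma(N+1)\sim N^{-\alpha}$ gives the explicit value $C=\frac{\Gamma(M)}{M^{\alpha}\,\Gamma(M-\alpha)}$. The only mildly delicate point is the uniform second-order remainder bound for the logarithm. I expect no real obstacle, since $n\ge M>\alpha$ keeps $1-\alpha/n$ bounded away from $0$.
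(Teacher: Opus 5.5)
Your proof is correct and follows the same route as the paper (take logarithms, expand $\log(1-\alpha/n) = -\alpha/n + O(n^{-2})$, and compare $\sum_{n=M}^N 1/n$ with $\log N - \log M$), but it is more careful: the paper only concludes that the log-expression is $O(1)$, which gives boundedness (enough for step (a)) but not existence of the limit, whereas your use of $\log N - H_N \to -\gamma$ together with absolute convergence of $\sum r_n$ genuinely proves convergence. Your Gamma-function cross-check is also correct and worth keeping, since $C = \Gamma(M)/(M^{\alpha}\Gamma(M-\alpha))$ shows the footnote's claim $C=1$ fails for fixed $M$ (e.g.\ $M=1$, $\alpha=1/2$ gives $C=1/\sqrt{\pi}$) and is only correct in the limit $M\to\infty$.
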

\begin{remark}\label{remark:corr}
    The above result remains valid even when \( N - M \) is fixed and both \( N \) and \( M \) grow large.
\end{remark}
\end{proof}
\begin{proof}
    We take the logarithm of the left-hand side and apply the Taylor expansion $\log(1-x) = -x - \frac{x^2}{2} - ...$ we get:
    \begin{align*}
        \log \left(\left(\frac{N}{M}\right)^\alpha \prod_{n=M}^N \frac{n-\alpha}{n}\right) &= \alpha\log \left(N\right) - \alpha\log \left(M\right) + \sum_{n=M}^N\log \left(1-\frac{\alpha}{n}\right)\\
        &= \alpha\log \left(N\right) - \alpha\log \left(M\right) + O(1) - \sum_{n=M}^N \frac{\alpha}{n}
    \end{align*}
    Using the fact that 
\[
\sum_{n=M}^N \frac{1}{n} = \log N - \log M + O(1),
\]
we conclude that the entire expression is \(O(1)\). Exponentiating both sides yields the desired result. $\qed$
\end{proof}
\begin{remark}\label{remark:optimalPrices}
    Looking at \Cref{rStar} we can see:
    \begin{align*}
        r^* &= V(k,n-1) + 1 - V(k-1,n-1)\\
        &\approx k\log(n-1)  -k - \log(k!) + 1 \\
        &- (k-1)\log(n-1)  - (k-1) - \log((k-1)!)\\
        &= \log(n-1)-\log(k) \approx \log(\frac{n}{k})
    \end{align*}
    Then the prices posted by DROP is also converging to the optimal prices.
\end{remark}
\subsection{Proofs of \Cref{sec:pareto}}
\propParetoD*
\begin{proof}
We know
    \begin{align*}
        D(k,n) &= D(k,n-1) + p^{-\beta}\left(D(k-1,n-1)-D(k,n-1) + p\right)\\ &= D(k,n-1) + \frac{C_k^{\frac{\beta}{\beta-1}}}{n}\left(D(k-1,n-1)-D(k,n-1) + C_k^{\frac{1}{1-\beta}}n^{\frac{1}{\beta}}\right)
    \end{align*}    
we prove this by induction. Assume the statement is correct for $k-1$ (with the base being $k=0$) and we want to prove it for $k$. Define $\epsilon_n = D(k,n) - C_kn^{\frac{1}{\beta}}$ and $\delta_n = D(k-1,n) - C_{k-1}n^{\frac{1}{\beta}}$. By induction we know $\lim_{n\rightarrow\infty} \delta_nn^{\alpha} = 0$ and want to show the same for $\epsilon$. Rewriting the recursion using $\epsilon,\delta$ we will have:
\begin{align*}
\epsilon_n  =&  - C_kn^{\frac{1}{\beta}}+ \epsilon_{n-1}+ C_k(n-1)^{\frac{1}{\beta}} + \frac{C_k^{\frac{\beta}{\beta-1}}}{n}\left(\delta_{n-1} + C_{k-1}(n-1)^{\frac{1}{\beta}} - \epsilon_{n-1} - C_k(n-1)^{\frac{1}{\beta}} + C_k^{\frac{1}{1-\beta}}n^{\frac{1}{\beta}}\right)\\
=& - C_kn^{\frac{1}{\beta}}+ \epsilon_{n-1}+ C_k(n-1)^{\frac{1}{\beta}} + \frac{C_k^{\frac{\beta}{\beta-1}}}{n}\left(\delta_{n-1} - \epsilon_{n-1} - \frac{\beta-1}{\beta}C_k^{\frac{1}{1-\beta}}(n-1)^{\frac{1}{\beta}} + C_k^{\frac{1}{1-\beta}}n^{\frac{1}{\beta}}\right)\\
=&  \epsilon_{n-1} + \frac{C_k^{\frac{\beta}{\beta-1}}}{n}\left(\delta_{n-1} - \epsilon_{n-1}\right)+C_k\left(- n^{\frac{1}{\beta}}+(n-1)^{\frac{1}{\beta}}+\frac{n^{\frac{1}{\beta}}-(n-1)^{\frac{1}{\beta}}}{n}+\frac{1}{\beta}\cdot\frac{(n-1)^{\frac{1}{\beta}}}{n}\right) \\
=& \epsilon_{n-1} + \frac{C_k^{\frac{\beta}{\beta-1}}}{n}\left(\delta_{n-1} - \epsilon_{n-1}\right)+\mathcal{O}(n^{\frac{1}{\beta}-2})\\
=& \left(1-\frac{C_k^{\frac{\beta}{\beta-1}}}{n}\right)\epsilon_{n-1} + \mathcal{O}(n^{\frac{1}{\beta}-2+\zeta})
\end{align*}
Where the last equation came from the fact that $\delta_n = \mathcal{O}(n^{\frac{1}{\beta}-1+\zeta})$ for every $\zeta > 0$.
Using triangle inequality, this shows that:
\begin{align*}
    |\epsilon_n| \leq& \sum_{i=1}^n \mathcal{O}(i^{\frac{1}{\beta}-2+\zeta})\prod_{j=i+1}^n \left(1-\frac{C_k^{\frac{\beta}{\beta-1}}}{j}\right) \\
    =& \sum_{i=1}^n \mathcal{O}(i^{\frac{1}{\beta}-2+\zeta}) \Theta \left((\frac{i}{n})^{C_k^{\frac{\beta}{\beta-1}}}\right)\\
    =&  (\frac{1}{n})^{C_k^{\frac{\beta}{\beta-1}}}\sum_{i=1}^n \mathcal{O}(i^{\frac{1}{\beta}-2+\zeta+C_k^{\frac{\beta}{\beta-1}}})\\
    =& (\frac{1}{n})^{C_k^{\frac{\beta}{\beta-1}}} \mathcal{O}(n^{\frac{1}{\beta}-1+\zeta+C_k^{\frac{\beta}{\beta-1}}})= \mathcal{O}(n^{\frac{1}{\beta}-1+\zeta})
\end{align*}
Therefore, for any $\alpha < 1-\frac{1}{\beta}$, $\lim_{n\rightarrow \infty} n^\alpha \epsilon_n = 0$.$\Box$
\end{proof}

Similar to this proof we can prove a more general lemma that will be useful in Proof of \Cref{propParetoV}.

\medskip
\begin{lemma}\label{lem:Lower}Let $\mu>\tilde{\alpha}$, let $(a_n)$ satisfy
$a_n\ge0$ and $a_n\le\left(1-\frac{\mu}{n-1}\right)a_{n-1}+s_n$ for all large $n$, with
$s_n=o\!\left(n^{-\tilde{\alpha}-1}\right)$. Then $a_n=o\!\left(n^{-\tilde{\alpha}}\right)$.
\end{lemma}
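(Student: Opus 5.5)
Fix $\mu > \tilde{\alpha}$ and pick an intermediate exponent $\gamma$ with $\tilde{\alpha} < \gamma < \mu$. The plan is to unroll the linear recursive inequality exactly as in the proof of \Cref{propParetoD}, bound the product of contraction factors using \Cref{lem:ndividedM}, and then show that both the homogeneous part and the forcing part are $o(n^{-\tilde{\alpha}})$. Concretely, let $N_0$ be large enough that the inequality holds for all $n > N_0$ and that $1-\frac{\mu}{n-1} > 0$ there. Nonnegativity of $a_n$ and of the factors lets us iterate the inequality without any sign issues:
\begin{align*}
a_n \le a_{N_0}\prod_{j=N_0+1}^{n}\left(1-\frac{\mu}{j-1}\right) + \sum_{i=N_0+1}^{n} s_i^{+}\prod_{j=i+1}^{n}\left(1-\frac{\mu}{j-1}\right),
\end{align*}
where $s_i^{+}=\max(s_i,0)$. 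If $s_i$ is negative we only strengthen the bound, and $s_i^{+}$ is still $o(i^{-\tilde{\alpha}-1})$.

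\textbf{Step 1 (products).} By \Cref{lem:ndividedM}, together with \Cref{remark:corr} so that the bound is uniform in the lower index, there is a constant $C'$ with
\begin{align*}
\prod_{j=i+1}^{n}\left(1-\frac{\mu}{j-1}\right) \le C'\left(\frac{i}{n}\right)^{\mu}
\end{align*}
for all $N_0 \le i \le n$. A cruder route also works: use $\log(1-x)\le -x$ and compare the harmonic sum with $\log(n/i)$, which gives the same bound directly.

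\textbf{Step 2 (homogeneous term).} The first term is then $O(n^{-\mu})$. Since $\mu > \tilde{\alpha}$, this is $o(n^{-\tilde{\alpha}})$.

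\textbf{Step 3 (forcing term).} Given $\epsilon>0$, choose $N_\epsilon \ge N_0$ with $s_i^{+} \le \epsilon\, i^{-\tilde{\alpha}-1}$ for $i > N_\epsilon$. Split the sum at $N_\epsilon$. The head, $i \le N_\epsilon$, is a finite constant times $n^{-\mu}$, hence $o(n^{-\tilde{\alpha}})$. The tail is bounded by
\begin{align*}
C'\epsilon\, n^{-\mu}\sum_{i=N_\epsilon+1}^{n} i^{\mu-\tilde{\alpha}-1} \le \frac{C''\epsilon}{\mu-\tilde{\alpha}}\, n^{-\tilde{\alpha}},
\end{align*}
using the integral comparison $\sum_{i\le n} i^{\mu-\tilde{\alpha}-1} = O\!\left(n^{\mu-\tilde{\alpha}}\right)$, which holds because $\mu-\tilde{\alpha}>0$. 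This is the key place where $\mu>\tilde{\alpha}$ is used a second time. It follows that $\limsup_n n^{\tilde{\alpha}} a_n \le C''\epsilon/(\mu-\tilde{\alpha})$, and letting $\epsilon \to 0$ finishes the proof.

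The main obstacle is Step 1: we need the product bound to hold uniformly in the starting index $i$, not just as $n\to\infty$ for a fixed $i$. I would handle this with the elementary estimate $\prod_{j=i+1}^{n}\left(1-\frac{\mu}{j-1}\right) \le \exp\!\left(-\mu\sum_{j=i}^{n-1} \frac{1}{j}\right) \le \exp\!\left(-\mu\log\frac{n}{i}\right)=\left(\frac{i}{n}\right)^{\mu}$. This uses $\sum_{j=i}^{n-1}\frac1j \ge \log\frac{n}{i}$, so no limiting constant is needed. With this estimate the intermediate exponent $\gamma$ becomes unnecessary.
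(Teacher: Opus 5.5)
Your proposal is correct and follows the paper's own argument: unroll the recursion, bound the products of contraction factors by $(i/n)^{\mu}$, and sum the forcing term using $\mu>\tilde{\alpha}$. Your estimate $\prod_{m=i}^{n-1}(1-\mu/m)\le e^{-\mu\sum_{m=i}^{n-1}1/m}\le (i/n)^{\mu}$ and your $\epsilon$-splitting of the sum make rigorous the uniformity in $i$ that the paper's $\Theta((i/n)^{\mu})$ and its passing of $o(\cdot)$ through the sum leave implicit, though that product bound needs $\mu\ge 0$ (true in every application in the paper, where $\tilde{\alpha}>0$), and for $\mu<0$ it holds only up to a constant factor $e^{|\mu|}$.
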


\begin{proof}
    \begin{align*}
    a_{n+1} \leq& \sum_{i=N+1}^n o(i^{-\tilde{\alpha}-1})\prod_{j=i}^{n-1} \left(1-\frac{\mu}{j}\right) + \mathcal{O}\left(\left(\frac{N}{n}\right)^{\mu}\right) \\
    =& \sum_{i=N+1}^n o(i^{-\tilde{\alpha}-1}) \Theta \left((\frac{i}{n})^{\mu}\right)\\
    =&  (\frac{1}{n})^{\mu}\sum_{i=N+1}^n o(i^{\mu-\tilde{\alpha}-1})\\
    =& (\frac{1}{n})^{\mu} o(n^{\mu-\tilde{\alpha}})= o(n^{-\tilde{\alpha}}). \hfill \Box
\end{align*}
\end{proof}

\propParetoV*
\begin{proof}
    Recall
    \begin{align*}
        V(k,n) = V(k,n-1) + \frac{\left(\beta - 1\right)^{\beta-1}}{\beta^\beta}(V(k,n-1)- V(k-1,n-1))^{1-\beta}
    \end{align*}
    Let us define $V(k,n) = C_kn^{\frac{1}{\beta}} + \epsilon_n$ and also $V(k-1,n) = C_{k-1}n^{\frac{1}{\beta}} + \delta_n$. By induction we know $\lim_{n \rightarrow \infty} n^{\tilde{\alpha}}\delta_n = 0$ for all $\tilde{\alpha}\in[\alpha,1-\frac{1}{\beta})$ and we want to show $\lim_{n \rightarrow \infty} n^{\alpha}\epsilon_n = 0$. Furthermore, define $d_n\coloneqq \epsilon_n-\delta_n$.

    Using the new notation the above recursion and the fact that $C_k^{\frac{1}{1-\beta}} = \frac{\beta}{\beta-1}\left(C_k - C_{k-1}\right)$:
    \begin{align}\label{eq:epsilon}
        \epsilon_n =& \epsilon_{n-1} + C_k\left((n-1)^{\frac{1}{\beta}}-n^{\frac{1}{\beta}}\right) + \frac{1}{\beta} \left(\frac{\beta}{\beta-1}\left(C_k-C_{k-1}\right)(n-1)^{\frac{1}{\beta}}+\frac{\beta}{\beta-1}d_{n-1}\right)^{1-\beta} \nonumber\\
        =& \epsilon_{n-1} + C_k\left((n-1)^{\frac{1}{\beta}}-n^{\frac{1}{\beta}}\right)  + \frac{1}{\beta} \left(C_k^{\frac{1}{1-\beta}}(n-1)^{\frac{1}{\beta}}+\frac{\beta}{\beta-1}d_{n-1}\right)^{1-\beta} \nonumber\\
        =& \epsilon_{n-1} + C_k\left((n-1)^{\frac{1}{\beta}}-n^{\frac{1}{\beta}} + \frac{1}{\beta} (n-1)^{\frac{1}{\beta}-1}\right)+ \frac{1}{\beta} \left(C_k^{\frac{1}{1-\beta}}(n-1)^{\frac{1}{\beta}}+\frac{\beta}{\beta-1}d_{n-1}\right)^{1-\beta} -\frac{C_k}{\beta} (n-1)^{\frac{1}{\beta}-1} \nonumber\\
        =& \epsilon_{n-1} + \frac{C_k}{2\beta}\left(1-\frac{1}{\beta}\right) (n-\zeta_n)^{\frac{1}{\beta}-2}+\frac{1}{\beta} \left(C_k^{\frac{1}{1-\beta}}(n-1)^{\frac{1}{\beta}}+\frac{\beta}{\beta-1}d_{n-1}\right)^{1-\beta} -\frac{C_k}{\beta} (n-1)^{\frac{1}{\beta}-1}
    \end{align}
    Doing the Taylor's expansion for the last two terms we will have:
    \begin{align*}
         \epsilon_{n-1} + \frac{C_k}{2\beta}\left(1-\frac{1}{\beta}\right) (n-1)^{\frac{1}{\beta}-2} -d_{n-1}
         C_k^{\frac{\beta}{\beta-1}}(n-1)^{-1} +\frac{\beta^2}{2(\beta-1)}d_{n-1}^2 C_k^{\frac{\beta+1}{\beta-1}}(n-1)^{-1-\frac{1}{\beta}} \geq \\
         \epsilon_n \geq \epsilon_{n-1} + \frac{C_k}{2\beta}\left(1-\frac{1}{\beta}\right) n^{\frac{1}{\beta}-2} -d_{n-1}
         C_k^{\frac{\beta}{\beta-1}}(n-1)^{-1}
    \end{align*}
    Note that the upper bound is valid for $d_{n-1} \geq 0$.
    
    \medskip
    
    \noindent\textbf{Step 1. Lower bound.} Assume $\epsilon_{n-1} <0$, Using the lower bound above for $n-1 > C_k^\frac{\beta}{\beta-1}$:
    \begin{align*}
        \epsilon_n \geq \epsilon_{n-1}\left(1-C_k^{\frac{\beta}{\beta-1}}(n-1)^{-1}\right) + \mathcal{O}\left(n^{\frac{1}{\beta}-2}\right)+\delta_{n-1}
         C_k^{\frac{\beta}{\beta-1}}(n-1)^{-1}
    \end{align*}
    We know $\delta_n = o(n^{-\tilde{\alpha}})$ for every $\tilde{\alpha}<1-\frac{1}{\beta}$. Pick one $\tilde{\alpha} > \max(0,\alpha)$. 
    This means that:
    \begin{align*}
        (-\epsilon_n)^+ \leq (-\epsilon_{n-1})^+\left(1-C_k^{\frac{\beta}{\beta-1}}(n-1)^{-1}\right) + |o(n^{-\tilde{\alpha}-1})|.
    \end{align*}
    where $(x)^+ \coloneqq \max(0,x)$. This is true obviously for $\epsilon_n > 0$. For the other case if $\epsilon_{n-1}$ is negative, it follows from multiplying the previous inequality with minus 1. If it's positive, then it follows from the fact that $\epsilon_n \geq \epsilon_{n-1}\left(1-C_k^{\frac{\beta}{\beta-1}}(n-1)^{-1}\right) + o(n^{-\tilde{\alpha}-1}) \geq  o(n^{-\tilde{\alpha}-1})$. Now, invoking \Cref{lem:Lower} (with $C_k^{\frac{\beta}{\beta-1}} \geq C_1^{\frac{\beta}{\beta-1}} = 1-\frac{1}{\beta} > \tilde{\alpha}$) we get that $(-\epsilon_n)^+ = o(n^{-\tilde{\alpha}})$. This means $\epsilon_n$ is lower bounded by $-o(n^{-\tilde{\alpha}})$. More formally, it shows that for every $C>0$ and $\tilde{\alpha} < 1-\frac{1}{\beta}$, there exist a large number $N$ that $\epsilon_n > -Cn^{-\tilde{\alpha}}$.
    \medskip
    
    \noindent\textbf{Step 2. Uniform upper bound.} In the case of $d_{n-1} > 0$, another useful upper bound for $\epsilon_n$ is to lower bound $d_{n-1} = 0$ as the third minus fourth term in \Cref{eq:epsilon} is negative for positive $d_{n-1}$. Hence:
    \begin{align*}
        \epsilon_n \leq \epsilon_{n-1} +  \frac{C_k}{2\beta}\left(1-\frac{1}{\beta}\right) (n-1)^{\frac{1}{\beta}-2}
    \end{align*}
    which is summable as $\frac{1}{\beta}-2 < -1$. Therefore as $\delta_n$ is converging to zero, the only possibility for an unbounded $\epsilon_n$ is that it jumps when $d_{n-1} < 0$ but from step 1 we know $d_n \geq -o(n^{-\tilde{\alpha}})$:
    \begin{align*}
        \epsilon_n \leq \epsilon_{n-1} + \mathcal{O}(n^{\frac{1}{\beta}-2})+\frac{1}{\beta} \left(C_k^{\frac{1}{1-\beta}}(n-1)^{\frac{1}{\beta}}-o(n^{-\tilde{\alpha}})\right)^{1-\beta} -\frac{C_k}{\beta} (n-1)^{\frac{1}{\beta}-1} = \epsilon_{n-1} +  o(n^{-1-\tilde{\alpha}})
    \end{align*}
    which is summable for $\tilde{\alpha} >0$. Therefore, there always exists an upper bound $\hat{B}$ for $\epsilon_n$ and an upper bound $B$ for $d_n$.
\medskip
    
    \noindent\textbf{Step 3. Upper bound.} Finally, as we showed before from negative to positive the jump will be bounded by $o(n^{-1-\tilde{\alpha}})$. Therefore in order to show $\epsilon_n$ is of $o(n^{-\tilde{\alpha}})$, we only need to do this for the case when $\epsilon_n > 0$ and $d_n>0$, if it's negative then $\epsilon_n$ is less than $\delta_n=o(n^{-\tilde{\alpha}})$. Thus, $\epsilon_n$ can only exceed the $o(n^{-\tilde{\alpha}})$ envelope if it undergoes a contiguous interval where $\epsilon_{n-1} > 0$ and $d_{n-1} > 0$. Then assume for an uninterrupted interval $M>> n > N$ we have $\epsilon_n,d_n > 0$.
    Using the Taylor's expansion upper bound:
    \begin{align*}
         \epsilon_n \leq& \epsilon_{n-1} + \frac{C_k}{2\beta}\left(1-\frac{1}{\beta}\right) (n-1)^{\frac{1}{\beta}-2} -d_{n-1}
         C_k^{\frac{\beta}{\beta-1}}(n-1)^{-1} +\frac{\beta^2}{2(\beta-1)}d_{n-1}^2 C_k^{\frac{\beta+1}{\beta-1}}(n-1)^{-1-\frac{1}{\beta}}\\
         \leq& \epsilon_{n-1} + \frac{C_k}{2\beta}\left(1-\frac{1}{\beta}\right) (n-1)^{\frac{1}{\beta}-2} -d_{n-1}
         C_k^{\frac{\beta}{\beta-1}}(n-1)^{-1} +\frac{\beta^2}{2(\beta-1)}B(|\epsilon_{n-1}|+|\delta_{n-1}|) C_k^{\frac{\beta+1}{\beta-1}}(n-1)^{-1-\frac{1}{\beta}}\\
         =& \epsilon_{n-1}\left(1-\frac{C_k^{\frac{\beta}{\beta-1}}-\frac{\beta^2}{2(\beta-1)}B(n-1)^{-\frac{1}{\beta}} C_k^{\frac{\beta+1}{\beta-1}}}{n-1}\right) + o(n^{-1-\tilde{\alpha}}).
    \end{align*}
    As $C_k^{\frac{\beta}{\beta-1}} \geq 1-\frac{1}{\beta}$, for $n > N$ we will have $C_k^{\frac{\beta}{\beta-1}}-\frac{\beta^2}{2(\beta-1)}B(n-1)^{-\frac{1}{\beta}} C_k^{\frac{\beta+1}{\beta-1}}> \tilde{\alpha}$, then for those $n$ we can write:
    \begin{align*}
         \epsilon_n \leq
         \epsilon_{n-1}\left(1-\frac{D}{n-1}\right) + o(n^{-1-\tilde{\alpha}}).
    \end{align*}
    Again, invoking \Cref{lem:Lower} will conclude the proof.$\Box$
\end{proof}

\end{appendix}
\end{document}